\documentclass[11pt,a4paper,reqno,intlimits]{amsart}

\usepackage{amssymb}
\usepackage{chicago}
\usepackage{color}
\usepackage[mathscr]{euscript}
\usepackage{xspace}
\usepackage{enumerate}
\usepackage{epsfig}

\DeclareMathAlphabet{\mathpzc}{OT1}{pzc}{m}{it}

\begin{document}

\theoremstyle{plain}
\newtheorem{theorem}{Theorem}[section]
\newtheorem{lemma}[theorem]{Lemma}
\newtheorem{proposition}[theorem]{Proposition}
\newtheorem{claim}[theorem]{Claim}
\newtheorem{corollary}[theorem]{Corollary}
\newtheorem{axiom}{Axiom}

\theoremstyle{definition}
\newtheorem{remark}[theorem]{Remark}
\newtheorem{note}{Note}[section]
\newtheorem{definition}[theorem]{Definition}
\newtheorem{example}[theorem]{Example}
\newtheorem*{ackn}{Acknowledgements}
\newtheorem{assumption}{Assumption}
\newtheorem{approach}{Approach}
\newtheorem{critique}{Critique}
\newtheorem{question}{Question}
\newtheorem{aim}{Aim}
\newtheorem*{asa}{Assumption ($\mathbf{A}$)}
\newtheorem*{asp}{Assumption ($\mathbb{P}$)}
\newtheorem*{ass}{Assumption ($\mathbb{S}$)}
\renewcommand{\theequation}{\thesection.\arabic{equation}}
\numberwithin{equation}{section}

\newcommand{\Law}{\ensuremath{\mathop{\mathrm{Law}}}}
\newcommand{\loc}{{\mathrm{loc}}}

\let\SETMINUS\setminus
\renewcommand{\setminus}{\backslash}

\def\stackrelboth#1#2#3{\mathrel{\mathop{#2}\limits^{#1}_{#3}}}

\newcommand{\prozess}[1][L]{{\ensuremath{#1=(#1_t)_{0\le t\le T}}}\xspace}
\newcommand{\prazess}[1][L]{{\ensuremath{#1=(#1_t)_{t\ge0}}}\xspace}
\newcommand{\pt}[1][N]{\ensuremath{\P_{T_{#1}}}\xspace}
\newcommand{\tk}[1][N]{\ensuremath{T_{#1}}\xspace}
\newcommand{\dd}[1][]{\ensuremath{\ud{#1}}\xspace}

\newcommand{\scal}[2]{\ensuremath{\langle #1, #2 \rangle}}
\newcommand{\set}[1]{\ensuremath{\left\{#1\right\}}}

\def\lev{L\'{e}vy\xspace}
\def\lk{L\'{e}vy--Khintchine\xspace}
\def\lib{LIBOR\xspace}
\def\mg{martingale\xspace}
\def\smmg{semimartingale\xspace}

\def\half{\frac{1}{2}}

\def\F{\ensuremath{\mathcal{F}}}
\def\bF{\mathbf{F}}
\def\R{\ensuremath{\mathbb{R}}}
\def\Rp{\mathbb{R}_{\geqslant0}}
\def\Rm{\mathbb{R}_{\leqslant 0}}
\def\C{\ensuremath{\mathbb{C}}}
\def\U{\ensuremath{\mathcal{U}}}
\def\I{\mathcal{I}}

\def\ii{\ensuremath{\mathrm{i}}}

\def\P{\ensuremath{\mathrm{I\kern-.2em P}}}
\def\E{\ensuremath{\mathrm{I\kern-.2em E}}}

\def\ott{{0\leq t\leq T}}
\def\idd{{1\le i\le d}}

\def\icc{\mathpzc{i}}
\def\ecc{\mathbf{e}_\mathpzc{i}}

\def\uk{u_{k+1}}

\def\e{\mathrm{e}}
\def\ud{\ensuremath{\mathrm{d}}}
\def\dt{\ud t}
\def\ds{\ud s}
\def\dx{\ud x}
\def\dy{\ud y}
\def\dv{\ud v}
\def\dsdx{\ensuremath{(\ud s, \ud x)}}
\def\dtdx{\ensuremath{(\ud t, \ud x)}}

\def\lsnc{\ensuremath{\mathrm{LSNC-}\chi^2}}
\def\nc{\ensuremath{\mathrm{NC-}\chi^2}}

\newcommand{\Ind}[1]{\mathbf{1}\left\{#1\right\}}

\title{The affine LIBOR models}

\author[M. Keller-Ressel, A. Papapantoleon, J. Teichmann]
       {Martin Keller-Ressel, Antonis Papapantoleon,\\Josef Teichmann}

\address{Institute of Mathematics, TU Berlin, Stra\ss e des 17. Juni 136,
         10623 Berlin, Germany}
\email{mkeller@math.tu-berlin.de}

\address{Institute of Mathematics, TU Berlin, Stra\ss e des 17. Juni 136,
         10623 Berlin, Germany}
\email{papapan@math.tu-berlin.de}

\address{ETH Z\"urich, Departement Mathematik, R\"amistrasse 101,
         CH-8092 Z\"urich, Switzerland}
\email{josef.teichmann@math.ethz.ch}

\thanks{All authors gratefully acknowledge the financial support from the
        FWF-grant Y 328 (START prize of the Austrian Science Fund).
        We would also like to thank Ismail Laachir (INRIA Rocquencourt) for
        helpful remarks. A previous version was circulated under the title
        ``\textit{A new approach to LIBOR modeling}''.}

\keywords{LIBOR rate models, forward price models, affine processes,
          analytically tractable models}
\subjclass[2000]{60H30, 91G30 (2010 MSC)}

\date{}\maketitle\pagestyle{myheadings}\frenchspacing

\begin{abstract}
We provide a general and flexible approach to LIBOR modeling based on the class
of affine factor processes. Our approach respects the basic economic requirement
that LIBOR rates are non-negative, and the basic requirement from mathematical
finance that LIBOR rates are analytically tractable martingales with respect to
their own forward measure. Additionally, and most importantly, our approach also
leads to analytically tractable expressions of multi-LIBOR payoffs. This
approach unifies therefore the advantages of well-known forward price models
with those of classical LIBOR rate models. Several examples are added and
prototypical volatility smiles are shown. We believe that the CIR-process based
LIBOR model might be of particular interest for applications, since closed form
valuation formulas for caps and swaptions are derived.
\end{abstract}

\section{Introduction}

Let $T_0<\ldots<T_N$ be a discrete tenor of maturity dates. LIBOR rates are
related to the observable ratio of prices of zero-coupon bonds with maturity
$T_{k-1}$ and $T_{k}=T_{k-1}+\delta$ via
\begin{align*}
L(t,T_{k-1}) = \frac{1}{\delta} \left(\frac{B(t,T_{k-1})}{B(t,T_k)}-1\right).
\end{align*}
The nature of interbank loans, as well as the daily calculation of LIBOR rates
as the trimmed arithmetic average of interbank quoted rates  (see
\texttt{www.bbalibor.com}), yield that LIBOR rates should be non-negative. A
requirement from mathematical finance is that LIBOR rates should be martingales
with respect to their own forward measure $ \P_{T_k} $. That is, when
$B(\cdot,T_{k}) $ is considered as num\'eraire of the model, then discounted
bond prices $\big(\frac{B(t,T_{k-1})}{B(t,T_k)}\big), {0\leq t\leq T_{k-1}}$
should be martingales. An additional basic requirement motivated by applications
is the tractability of the model, since otherwise one cannot calibrate to the
market data. Therefore the LIBOR rate processes $ L(\cdot,T_{k-1}) $ should have
tractable stochastic dynamics with respect to their forward measure $\P_{T_k}$,
for $k=1,\ldots,N$; for instance of exponential L\'evy type along the discrete
tenor of dates $T_0<\ldots<T_N $. Here the terminus ``analytically tractable''
is used in the sense that either the density of the stochastic factors driving
the LIBOR rate process is known explicitly, or the characteristic function. In
both cases, the numerical evaluation, which is needed for calibration to the
market, is easily done.

In applications, the stochastic factors have to be evaluated with respect to
different num\'eraires. In order to describe the dynamics with respect to a
suitable martingale measure, for instance the terminal forward measure
$\P_{T_N}$, we have to perform a change of measure. Usually this change of
measure destroys the tractable structure of $ L(\cdot,T_{k-1}) $ with
respect to its forward measure. This well-known phenomenon makes LIBOR market
models based on Brownian motions or L\'evy processes quite delicate to apply for
multi-LIBOR-dependent payoffs: either one performs expensive Monte Carlo
simulations or one has to approximate the equation (the keyword here is
``freezing the drift'', see e.g. \citeNP{SiopachaTeichmann07}).

In order to overcome this natural intractability, forward price models have been
considered, where the tractability with respect to other forward measures is
pertained when changing the measure. Hence, modeling forward prices
$F(\cdot,T_{k-1},T_{k})=1+\delta L(\cdot,T_k)$ produces a very tractable model
class. However negative LIBOR rates can occur with positive probability, which
contradicts any economic intuition.

In this work, we propose a new approach to modeling LIBOR rates based on affine
processes. The approach follows the footsteps of forward price models, however,
we are able to circumvent their drawback: in our approach LIBOR rates are almost
surely non-negative. Moreover, the model remains analytically tractable with
respect to all possible forward measures, hence the calibration and evaluation
of derivatives is fairly simple. In fact, this is the first LIBOR model
where the following are satisfied simultaneously:
\begin{itemize}
\item \lib rates are non-negative;
\item caps and swaptions can be priced easily using Fourier methods, for several
      affine factor processes;
\item closed-form valuation formulas for caps \textit{and} swaptions are derived
      for the CIR process, in 1- and 2-factor models.
\end{itemize}

A particular feature of our approach is that the factor process is a
time-homogenous Markov process when we consider the model with respect to the
terminal measure $\P_{T_N}$. With respect to forward measures the factor
processes will show time-inhomogeneities due to the nature of the change of
measure. When we compare our approach to an affine factor setting within the
HJM-methodology, we observe that in both cases one can choose -- with respect to
the spot measure in the HJM setting or with respect to the terminal measure in
our setting -- a time-homogeneous factor process. \lib rates have in both cases
a typical dependence on time-to-maturity $ T_N -t $.

The remainder of the article is organized as follows: in Section \ref{axioms} we
formulate basic axioms for LIBOR market models. In Section \ref{ex_approaches}
we recapitulate the literature on LIBOR models. In Section \ref{affine} we
introduce affine processes which are applied in Section \ref{MMge1} to the
construction of certain martingales. In Section \ref{new_approach} we present
our new approach to LIBOR market models, which is applied in Section
\ref{derivatives} to derivative pricing. In Section \ref{toy-1} several
examples, including the CIR-based models, are presented and in Section
\ref{numerical_illustrations} we show prototypical volatility surfaces generated
by the models.

\section{Axioms}
\label{axioms}

Let us denote by $L(t,T)$ the time-$t$ forward \lib rate that is settled at time
$T$ and received at time $T+\delta$; here $T$ denotes some finite time horizon.
The \lib rate is related to the prices of zero coupon bonds, denoted by
$B(t,T)$, and the forward price, denoted by $F(t,T,T+\delta)$, by the following
equations:
\begin{align}\label{basic}
  1+\delta L(t,T) = \frac{B(t,T)}{B(t,T+\delta)} = F(t,T,T+\delta).
\end{align}
One postulates that the \lib rate should satisfy the following axioms, motivated
by \textit{economic theory}, \textit{arbitrage pricing theory} and
\textit{applications}.

\begin{axiom}
The \lib rate should be \emph{non-negative}, i.e. $L(t,T)\ge0$ for all
$0\le t\le T$.
\end{axiom}
\begin{axiom}
The \lib rate process should be a \emph{martingale} under the corresponding
\emph{forward measure}, i.e. $L(\cdot,T)\in\mathcal{M}(\P_{T+\delta})$.
\end{axiom}
\begin{axiom}
The \lib rate process, i.e. the (multivariate) collection of all \lib rates,
should be \emph{analytically tractable} with respect to as many forward measures
as possible. Minimally, closed-form or semi-analytic valuation formulas should
be available for the most liquid interest rate derivatives, i.e. caps and
swaptions, so that the model can be \textit{calibrated} to market data in
\textit{reasonable time}.
\end{axiom}

Furthermore we wish to have \textit{rich structural properties}: that is, the
model should be able to reproduce the observed phenomena in interest rate
markets, e.g. the shape of the implied volatility surface in cap markets or the
implied correlation structure in swaption markets.

\section{Existing approaches}
\label{ex_approaches}

There are several approaches to  \lib modeling developed in the literature
attempting to fulfill the axioms and practical requirements discussed in the
previous section.  We briefly describe below the two main approaches, namely
the \textit{LIBOR market model} (LMM) and the \textit{forward price model}, and
comment on their ability to fulfill them. We also briefly discuss
\emph{Markov-functional models}.

\begin{approach}
In \emph{\lib market models}, developed in a series of articles by
\shortciteN{SandmannSondermannMiltersen95},
\shortciteN{MiltersenSandmannSondermann97}, \shortciteN{BraceGatarekMusiela97},
and \citeN{Jamshidian97}, each forward \lib rate is modeled as an exponential
Brownian motion under its corresponding forward measure. This model provides a
theoretical justification for the common market practice of pricing caplets
according to Black's futures formula \cite{Black76}, i.e. assuming that the
forward LIBOR rate is log-normally distributed. Several extensions of this
framework have been proposed in the literature, using jump-diffusions, \lev
processes or general semimartingales as the driving motion (cf. e.g.
\citeNP{GlassermanKou03}, \citeNP{EberleinOezkan05}, \citeNP{Jamshidian99}), or
incorporating stochastic volatility effects (cf. e.g. Andersen and
Brotherton-Ratcliffe \citeyearNP{AndersenBrothertonRatcliffe05}).

We can generically describe LIBOR market models as follows: on a stochastic
basis consider a discrete tenor of dates $(T_k)_{0\le k\le N}$, forward measures
$(\pt[k])_{0\le k\le N}$ associated to each tenor date and appropriate
volatility functions $(\lambda(\cdot,T_k))_{0\le k\le N}$. Let $H$ be a
semimartingale starting from zero, with predictable characteristics $(B,C,\nu)$
or local characteristics $(b,c,F)$ under the terminal measure $\pt$, driving
all LIBOR rates. Then, the dynamics of the forward LIBOR rate with maturity
$T_k$ is
\begin{align}\label{LIBOR-dyn}
L(t,T_k) = L(0,T_k)
 \exp\left(\int_0^t b(s,T_{k})\ud s + \int_0^t\lambda(s,T_k)\ud
H^{T_{k+1}}_s\right)>0,
\end{align}
where $H^{T_{k+1}}$ denotes the \textit{martingale part} of the \smmg $H$ under
the measure $\pt[k+1]$, and the drift term is
\begin{align}\label{LIBOR-drift-term}
b(s,T_{k})
 & = -\half\lambda(s,T_k)^2 c_s\nonumber\\
 & \quad\;
     - \int_{\R} \big(\e^{\lambda(s,T_k)x}-1-\lambda(s,T_k)x\big)
F_s^{T_{k+1}}(\ud x),
\end{align}
ensuring that $L(\cdot,T_k)\in\mathcal M(\P_{T_{k+1}})$. The semimartingale
$H^{T_{k+1}}$ has the $\P_{T_{k+1}}$-canonical decomposition
\begin{align}\label{LIBOR-Levy}
H_t^{T_{k+1}}
      = \int_0^t \sqrt{c_s}\ud W_s^{T_{k+1}}
      + \int_0^t\int_{\R}x(\mu^H-\nu^{T_{k+1}})\dsdx,
\end{align}
where the $\P_{T_{k+1}}$-Brownian motion is
\begin{align}\label{Levy-LIBOR-Brownian}
W_t^{T_{k+1}}
 & = W_t - \int_0^t \left(\sum_{l=k+1}^{N}
     \frac{\delta_l L(t-,T_l)}{1+\delta_l
L(t-,T_l)}\lambda(t,T_l)\right)\sqrt{c_s} \ud s,
\end{align}
and the $\P_{T_{k+1}}$-compensator of $\mu^H$ is
\begin{align}\label{Levy-LIBOR-compensator}
\nu^{T_{k+1}}\dsdx
 &= \left(\prod_{l=k+1}^{N}\frac{\delta_l L(t-,T_l)}{1+\delta_l L(t-,T_l)}
    \Big(\e^{\lambda(t,T_l)x}-1\Big) +1\right)\nu \dsdx.
\end{align}
As an example, the classical log-normal LIBOR model is described in this context
by setting $(b,c,F)=(0,\sigma^2,0)$.

Now, let us discuss some consequences of this modeling approach. Clearly $H$
remains a \smmg under any forward measure, since the class of semimartingales is
closed under equivalent measure changes. However, any \textit{additional}
structure that we impose on the process $H$ to make the model analytically
tractable will be \textit{destroyed} by the measure changes from the terminal to
the forward measures, as the random, state-dependent terms
$\frac{\delta_l L(t-,T_l)}{1+\delta_l L(t-,T_l)}$ entering into eqs.
\eqref{Levy-LIBOR-Brownian} and \eqref{Levy-LIBOR-compensator} clearly indicate.
For example, if $H$ is a \lev process under $\pt$, then $H^{T_{k+1}}$ is not a
\lev process (not even a process with independent increments) under $\pt[k+1]$.
Hence, we have the following consequences:
\begin{enumerate}
\item if $H$ is a \textit{continuous} \smmg, then caplets can be priced in
      closed form, but \textit{not} swaptions or other multi-LIBOR derivatives;
\item if $H$ is a \textit{general} \smmg, then even caplets \textit{cannot} be
      priced in closed form.
\end{enumerate}

Moreover, the Monte Carlo simulation of LIBOR rates in this model is
computationally \textit{very expensive}, due to the complexity evident in eqs.
\eqref{Levy-LIBOR-Brownian} and \eqref{Levy-LIBOR-compensator}. Expressing the
dynamics of the \lib rate in \eqref{LIBOR-dyn} under the terminal measure leads
to a \textit{random drift term}, hence we need to simulate the whole path and
not just the terminal random variable. More severely, the random drift term of
e.g. $L(\cdot,T_k)$ depends on \textit{all} subsequent \lib rates
$L(\cdot,T_l)$, $k+1\le l\le N$. Indeed, the dependence of each rate on all
rates with later maturity can be represented as a strictly (lower) triangular
matrix. Hence, all \lib rates need to be evolved simultaneously in the Monte
Carlo simulation.

Of course, some remedies for the analytical \textit{in}tractability of the \lib
market model have been proposed in the literature; cf.  \citeANP{JoshiStacey08}
\citeyear{JoshiStacey08}
and \shortciteN{GatarekBachertMaksymiuk06} for excellent overviews, focused on
the log-normal LMM. The common practice is to replace the random terms in
\eqref{Levy-LIBOR-Brownian} and \eqref{Levy-LIBOR-compensator} by their
\textit{deterministic} initial values, i.e. to approximate
\begin{align}\label{frozen}
\frac{\delta_l L(t-,T_l)}{1+\delta_l L(t-,T_l)}
 \approx  \frac{\delta_l L(0,T_l)}{1+\delta_l L(0,T_l)};
\end{align}
this is usually called the \textit{``frozen drift'' approximation}. As a
consequence, the structure of the process $H$ will be -- loosely speaking --
preserved under the successive measure changes. For example, if $H$ is a \lev
process, then $H^{T_{k+1}}$ will become a time-inhomogeneous \lev process (due
to the time-dependent volatility function). Hence, caps and swaptions can be
priced in closed form. However, empirical results show that this approximation
does not yield acceptable results.

More recently, \citeN{SiopachaTeichmann07} and Papapantoleon and Siopacha
\citeyear{PapapantoleonSiopacha09} have developed Taylor approximation schemes
for the random terms entering \eqref{Levy-LIBOR-Brownian} and
\eqref{Levy-LIBOR-compensator} using perturbation-based techniques. This method
offers approximations that are more precise than the ``frozen drift''
approximation \eqref{frozen}, while at the same time being faster than
simulating the actual dynamics. Moreover, they offer a theoretical justification
for the ``frozen drift'' approximation as the zero-order Taylor expansion. In
related work, \shortciteN{PapapantoleonSchoenmakersSkovmand10} have developed
log-\lev approximations for the \lev \lib model, thus allowing for accurare very
long stepping in the Monte Carlo simulation.

Concluding, LIBOR market models satisfy Axioms 1 and 2. As far as Axiom 3 is
concerned, LIBOR rates are analytically tractable only under their own forward
measure and only if the driving process is continuous. LIBOR rates are \emph{not
tractable with respect to any other forward measure}. Therefore caps can
(possibly) be priced in closed form, but not swaptions or more exotic
multi-LIBOR derivatives.

\begin{remark}
Additionally, it is econometrically not desirable to model LIBOR rates as
exponentials of processes with independent increments. However we admit that
this is a minor point.
\end{remark}
\end{approach}

\pagebreak

\begin{approach}
In the \emph{forward price model} proposed by Eberlein and \"Ozkan
\citeyear{EberleinOezkan05} and
\citeN{Kluge05}, the forward price -- instead of the \lib rate -- is modeled as
an exponential \lev process or semimartingale. Consider a setting similar to the
previous approach: $(T_k)_{0\le k\le N}$ is a discrete tenor of dates,
$(\pt[k])_{0\le k\le N}$ are forward measures and $H$ denotes a \smmg with
characteristics $(B,C,\nu)$ under the terminal measure $\pt$, where $H_0=0$.
Then, the dynamics of the forward price $F(\cdot,T_k,\tk[k+1])$, or equivalently
of $1+\delta L(\cdot,T_k)$, is given by
\begin{align}\label{FP-dyn}
1+\delta L(t,T_k)
 = (1+\delta L(0,T_k))
   \exp\left(\int_0^t b(s,T_{k})\ud s+\int_0^t \lambda(s,T_k)\ud
H^{T_{k+1}}_s\right),
\end{align}
where $H^{T_{k+1}}$ denotes the martingale part of $H$ under the measure
$\pt[k+1]$ and the drift term $b(s,T_{k})$ is analogous to
\eqref{LIBOR-drift-term}, ensuring that $L(\cdot,T_k)\in\mathcal
M(\P_{T_{k+1}})$. The \smmg $H^{\tk[k+1]}$ has the $\P_{T_{k+1}}$-canonical
decomposition
\begin{align}
H_t^{T_{k+1}}
      = \int_0^t \sqrt{c_s}\ud W_s^{T_{k+1}}
      + \int_0^t\int_{\R}x(\mu^H-\nu^{T_{k+1}})\dsdx,
\end{align}
where the $\P_{T_{k+1}}$-Brownian motion is
\begin{align}\label{Levy-FP-Brownian}
W_t^{T_{k+1}}
 & = W_t - \int_0^t \left(\sum_{l=k+1}^{N}\lambda(t,T_l)\right)\sqrt{c_s} \ud s,
\end{align}
and the $\P_{T_{k+1}}$-compensator of $\mu^H$ is
\begin{align}\label{Levy-FP-compensator}
\nu^{T_{k+1}}\dsdx
 &= \exp\left(x\sum_{l=k+1}^{N}\lambda(s,T_l)\right)\nu\dsdx.
\end{align}

Now, we can immediately deduce from \eqref{Levy-FP-Brownian} and
\eqref{Levy-FP-compensator} that the structure of the process $H$ under $\pt$ is
preserved under any forward measure $\pt[k+1]$. For example, if $H$ is a \lev
process under $\pt$ then it becomes a time-inhomogeneous \lev process under
$\pt[k+1]$, if the volatility function is time-dependent. We can also deduce
that the measure change from the terminal to any forward measure is an Esscher
transformation (cf. \citeNP{KallsenShiryaev02}).

As a result, the model is analytically tractable, thus caps and swaptions can be
priced in semi-analytic form (similarly to an HJM model). Even some
path-dependent derivatives can be priced easily, cf. Kluge and Papapantoleon
\citeyear{KlugePapapantoleon06}. However, \textit{negative} \lib rates can occur
in this model, since forward prices are positive but not necessarily greater
than one. Thus Axiom 1 is \textit{violated}, while Axioms 2 and 3 are satisfied
in the best possible way: \lib rate processes are analytically tractable with
respect to all possible forward measures.
\end{approach}

\begin{remark}\label{HJM-F-eq}
The forward price model can be embedded in the HJM framework with a
\textit{deterministic} volatility structure; cf. \citeN[\S 3.1.1.]{Kluge05}.
\end{remark}

\begin{approach}
\emph{Markov-functional models} were introduced in the seminal paper of
\citeN{HuntKennedyPelsser00}. In contrast to the previous two approaches, the
aim of Markov-functional models is not to model some fundamental quantity, for
example the \lib or swap rate, directly. Instead, Markov-functional models are
constructed by inferring the model dynamics, as well as their functional forms,
through matching the model prices to the market prices of certain liquid
derivatives. That is, they are \emph{implied interest rate models}, and should
be thought of in a fashion similar to local volatility models and implied trees
in equity markets.

The main idea behind Markov-functional models is that bond prices and the
numeraire are, at any point in time, a function of a \emph{low-dimensional}
Markov process under some martingale measure. The functional form for the bond
prices is selected such that the model accurately calibrates to the relevant
market prices, while the freedom to choose the Markov process makes the model
realistic and tractable. Moreover, the functional form for the numeraire can be
used to reproduce the marginal laws of swap rates or other relevant instruments
for the calibration. For further details and concrete applications we refer the
reader to the books by \citeANP{HuntKennedy04} \citeyear{HuntKennedy04} and
\citeN{Fries07}, and the
references therein.

\begin{remark}
One can show that forward price models and affine LIBOR models, that will be
introduced in section \ref{new_approach}, belong to the class of
Markov-functional models, while LIBOR market models do not. In LMMs the \lib
rates are functions of a \textit{high-dimensional} Markov process. Moreover, it
is interesting to compare the properties of a ``good pricing model'' in Hunt et
al. \citeyear[pp. 392]{HuntKennedyPelsser00} with Axioms 1--3.
\end{remark}
\end{approach}

The first two modeling approaches we have reviewed might appear similar at first
sight, but they actually differ in quite fundamental ways -- apart from the
considerations regarding Axioms 1, 2 and 3.

On the one hand, the distributional properties are markedly different. In the
LIBOR market model -- driven by Brownian motion -- LIBOR rates are
\textit{log-normally} distributed, while in the forward price model -- again
driven by Brownian motion -- LIBOR rates are, approximately, \textit{normally}
distributed. Although there seems to be no consensus among market participants
on which assumption is better, it is worth pointing out that in the CEV model --
where for $\beta\to0$ the law is normal and for $\beta\to1$ the law is
log-normal -- a typical value for market data is $\beta\approx0.4$.

On the other hand, changes in the driving process affect \lib rates in the \lib
model and the forward price model in a very different way; see also \citeN[pp.
60]{Kluge05}. Assume that in a small time interval of length $\dt$ the driving
process changes its value by a small amount $\Delta$. Then, in the \textit{LIBOR
market model} we get:
\begin{align}
L(t+\dt,T)
 &= L(t,T) + \Delta\cdot L(t,T) + \mathcal{O}(\Delta^2),
\end{align}
while in the \textit{forward price model} we get:
\begin{align}
L(t+\dt,T) &= L(t,T) + \frac{\Delta}{\delta}
                    + \Delta\cdot L(t,T) + \mathcal{O}(\Delta^2).
\end{align}
Hence, in LMMs changes in the driving process affect the rate roughly
proportional to the current level of the LIBOR rate. In the forward price model
changes do not depend on the actual level of the LIBOR rate. \medskip

\noindent\textbf{Aim:}
\textit{We would like to construct a ``forward price''-type model with positive
\lib rates, i.e. we want a model that respects simultaneously Axioms 1, 2 and
3.}\medskip

A first idea would be to search for a process that makes the martingale in
\eqref{FP-dyn} greater than one, hence guaranteeing that LIBOR rates are always
\textit{positive}. However, such an attempt is doomed to fail since one demands
that
$$
\int_0^t b(s,T_{k})\ud s+\int_0^t \lambda(s,T_k)\ud H^{T_{k+1}}_s \geq 0
$$
with respect to the forward measure (or any other equivalent measure). This
reduces the class of available semimartingales considerably and restricts the
applicability of the models. We show in Section \ref{MMge1} an alternative
construction with rich stochastic structure.

\section{Affine processes}
\label{affine}

Let $(\Omega,\F,\bF,\P)$ denote a complete stochastic basis, where
$\bF=(\F_t)_{t\in[0,T]}$, and let $0 < T \le \infty$ denote some, possibly
infinite, time horizon. We consider a process $X$ of the following type:
\begin{asa}\label{assumption-affine}
Let \prozess[X] be a conservative, time-homogene\-ous, stochastically continuous
Markov process taking values in $D=\Rp^d$, and $(\P_x)_{x\in D}$ a family of
probability measures on $(\Omega,\F)$, such that $X_0 = x$ $\P_x$-almost surely,
for every $x \in D$. Setting
\begin{align}
\mathcal{I}_T
:= \set{u\in\R^d: \E_x\big[\e^{\scal{u}{X_T}}\big] < \infty,
        \,\,\text{for all}\; x \in D},
\end{align}
we assume that
\begin{itemize}
\item[(i)] $0 \in \I_T^\circ$;
\item[(ii)] the conditional moment generating function of $X_t$
  under $\P_x$ has exponentially-affine dependence on $x$; that
  is, there exist functions $\phi_t(u):[0,T]\times\I_T\to\R$ and
  $\psi_t(u):[0,T]\times\I_T\to\R^d$ such that
\begin{align}\label{affine-def}
\E_x\big[\exp\langle u,X_t\rangle\big]
 = \exp\big( \phi_t(u) + \langle\psi_t(u),x\rangle \big),
\end{align}
for all $(t,u,x) \in [0,T] \times \I_T \times D$.
\end{itemize}
\end{asa}
\noindent Here ``$\cdot$'' or $\langle\cdot,\cdot\rangle$ denote the inner
product on $\R^d$, and $\E_x$ the expectation with respect to $\P_x$.

Stochastic processes on $\Rp^d$ with the ``affine property'' \eqref{affine-def}
have been studied since the seventies as the continuous-time limits of
Galton--Watson branching processes with immigration, cf.
\citeN{KawazuWatanabe71}. More recently, such processes on the more general
state space $\Rp^m\times\R^{n}$ have been studied comprehensively, and with a
view towards applications in finance, by \citeN{DuffieFilipovicSchachermayer03}.
We will largely follow their approach, complemented by some results from
\citeN{KellerRessel08}.

By Theorem 3.18 in \citeN{KellerRessel08}, the right hand derivatives
\begin{align}\label{derivatives-FR}
F(u) := \frac{\partial}{\partial t}\big|_{t=0+}\phi_t(u)
 \qquad\text{ and } \qquad
R(u) := \frac{\partial}{\partial t}\big|_{t=0+}\psi_t(u)
\end{align}
exist for all $u\in \I_T$ and are continuous in $u$, such that $X$ is a `regular
affine process' in the sense of \shortciteN{DuffieFilipovicSchachermayer03}.
Moreover, $F$ and $R$ satisfy \lk-type equations. It holds that
\begin{align}\label{F-def}
F(u) = \langle b,u\rangle +
     \int_D\big(\e^{\langle\xi,u\rangle}-1\rangle\big)m(\ud \xi)
\end{align}
and
\begin{align}\label{R-def}
R_i(u) = \langle \beta_i,u\rangle
       + \Big\langle\frac{\alpha_i}2u,u\Big\rangle
       + \int_D\big(\e^{\langle\xi,u\rangle}-1-\langle
u,h^i(\xi)\rangle\big)\mu_i(\ud \xi),
\end{align}
where $(b,m,\alpha_i,\beta_i,\mu_i)_{1\le i\le d}$ are \textit{admissible
parameters}, and $h^i:\Rp^d\to\R^d$ are suitable truncation functions, defined
coordinate-wise by
\begin{align}
h^i_k(\xi) &:=
\begin{cases}
 0,\quad           & k\neq i\\
 \chi(\xi_k),\quad & k= i
\end{cases}
\qquad \text{for all $\xi\in\Rp^d, i\in\set{1,\dots,d}$}\;,
\end{align}
with $\chi(z)$ any bounded Borel function that behaves like $z$ in a
neighborhood of $0$, such as $\frac{z}{1+z^2}$ or $z1_{\set{|z|\le1}}$.

The parameters $(b,m,\alpha_i,\beta_i,\mu_i)_{1\le i\le d}$ have the following
form: $(\beta_i)_\idd$ and $b$ are $\R^d$-valued vectors, $(\alpha_i)_\idd$ are
positive semidefinite real $d\times d$ matrices, and $m$ and $(\mu_i)_\idd$ are
\lev measures on $\Rp^d$. They satisfy additional \emph{admissibility
conditions}; writing $I = \set{1, \dotsc, d}$, these conditions are given,
according to \shortciteN{DuffieFilipovicSchachermayer03}, by
\begin{align}\label{admissible-start}
 b\in \Rp^d
\end{align}
\begin{align}
\beta_{i(k)}\in\Rp \;\, \forall \, k\in I\backslash\{i\}
 \quad\text{ and }\quad
\beta_{i(i)}\in\R
\end{align}
\begin{align}
\alpha_{i(kl)} = 0
 \quad\text{if } k\in I\backslash\{i\}
 \text{ or } l\in I\backslash\{i\}
\end{align}
\begin{align}
 m(\{0\}) = 0 \quad\text{ and }\quad
\int_D (|\xi|\wedge1)m(\dd[\xi]) < \infty
\end{align}
where $|\xi|=\sum_{i}|\xi_i|$ for $\xi\in\R^d$; and, for all $i\in I$
\begin{align}\label{admissible-end}
 \mu_i(\{0\}) = 0 \quad\text{ and }\quad
\int_D \big[ (|\xi_{I\backslash\{i\}}|+|\xi_i|^2)\wedge1
\big]\mu_i(\dd[\xi]) < \infty.
\end{align}

The time-homogeneous Markov property of $X$ implies the following conditional
version of \eqref{affine-def}:
\begin{align}\label{affine-conditional}
\E_x\big[\exp\langle u,X_{t + s}\rangle\big|\F_s\big]
 = \exp\big( \phi_{t}(u) + \langle\psi_{t}(u),X_s\rangle \big),
\end{align}
for all $0 \le t+s \le T$ and $u \in \I_T$. Applying this equation iteratively,
we can deduce that the functions $\phi$ and $\psi$ satisfy the
\textit{semi-flow property}
\begin{equation}\label{flow}
\begin{split}
\phi_{t+s}(u) &= \phi_{t}(u)+\phi_{s}(\psi_t(u))\\
\psi_{t+s}(u) &= \psi_{s}(\psi_{t}(u))
\end{split}
\end{equation}
for all $0 \le t+s \le T$ and $u \in \I_T$, with initial condition
\begin{align}\label{phi-psi-0}
 \phi_0(u)=0
  \quad\text{ and }\quad
 \psi_0(u)=u.
\end{align}
For details we refer to Lemma 3.1 in \shortciteN{DuffieFilipovicSchachermayer03}
and Proposition 1.3 in \citeN{KellerRessel08}.

Differentiating the flow equations (and using the existence of
\eqref{derivatives-FR}) we arrive at the following ODEs (the \textit{generalized
Riccati equations}) satisfied by $\phi_t$ and $\psi_t$:
\begin{subequations}\label{Riccati}
\begin{align}
\frac{\partial}{\partial t}\phi_t(u)
 &= F(\psi_t(u)),  \qquad \phi_0(u)=0, \label{Ric-1}\\
\frac{\partial}{\partial t}\psi_t(u)
 &= R(\psi_t(u)),  \qquad \psi_0(u)=u, \label{Ric-2}
\end{align}
\end{subequations}
for $(t,u)\in [0,T] \times \I_T$; cf. \shortciteN[Theorem
2.7]{DuffieFilipovicSchachermayer03}. If $\psi_t(u)$ stays in $\I_T^\circ$ for
all $t \in [0,T]$, it is a unique solution. Note that if the jump measures $m$
and $\mu$ are zero, then $F(u)$ and each $R_i(u)$ are quadratic polynomials,
whence the differential equations degenerate into classical Riccati equations.

Finally, let us mention, that any choice of admissible parameters satisfying
\eqref{admissible-start}-\eqref{admissible-end}, and corresponding functions $F$
and $R$, gives rise to a uniquely defined affine process, whose moment
generating function can be calculated through the generalized Riccati equations
\eqref{Riccati}.

\begin{remark}\label{affine-1d}
We mention here the following examples of one-dimensional processes satisfying
Assumption~$(\mathbf{A})$:
\begin{enumerate}
\item Every \lev subordinator with cumulant generating function
 $\kappa(u)$ and finite exponential moment; it is characterized by
 the functions $F(u)=\kappa(u)$ and $R(u)=0$.
\item Every OU-type process (cf. \citeNP[section 17]{Sato99}) driven
 by a \lev subordinator with finite exponential moment; such a process
 is characterized by $F(u)=\kappa(u)$ and $R(u)=\beta u$, with
 $\beta\in\R$.
\item The squared Bessel process of dimension $\alpha$ (cf.
 \citeNP[Ch. XI]{RevuzYor99}), characterized by $F(u)=\alpha u$
 and $R(u)=2u^2$, with $\alpha>0$.
\end{enumerate}
\end{remark}

Finally, we will later need the following results; let us denote by
$(\ecc)_{\icc\le d}$ the unit vectors in $\R^d$ and let inequalities involving
vectors be interpreted component-wise.

\begin{lemma}\label{positivity}
The functions $\phi$ and $\psi$ satisfy the following:
\begin{enumerate}
\item $\phi_t(0) = \psi_t(0) = 0$ for all $t \in [0,T]$.
\item $\I_T$ is a convex set; moreover, for each $t \in [0,T]$,
 the functions $\I_T \ni u \mapsto \phi_t(u)$ and
 $\I_T \ni u \mapsto \psi_t(u)$ are (componentwise) convex.
\item $\phi_t(\cdot)$ and $\psi_t(\cdot)$ are order-preserving: let
 $(t,u),(t,v)\in [0,T] \times \I_T$, with $u\le v$.
 Then
\begin{align}\label{order}
\phi_t(u)\le \phi_t(v)
 \quad\text{ and }\quad
\psi_t(u)\le \psi_t(v).
\end{align}
\item $\psi_t(\cdot)$ is \emph{strictly} order-preserving: let
 $(t,u),(t,v) \in [0,T] \times \I_T^\circ$, with $u < v$. Then
 $\psi_t(u) < \psi_t(v)$.
\end{enumerate}
\end{lemma}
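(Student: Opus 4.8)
The plan is to deduce all four statements from the defining relation \eqref{affine-def} (and its conditional version \eqref{affine-conditional}), using two elementary facts about the state space $D=\Rp^d$: for fixed $x\in\Rp^d$ the map $u\mapsto\e^{\langle u,x\rangle}$ is componentwise non-decreasing, and an affine inequality $a+\langle p,x\rangle\le b+\langle q,x\rangle$ valid for \emph{all} $x\in\Rp^d$ forces both $a\le b$ (set $x=0$) and $p\le q$ (let one coordinate of $x$ tend to $+\infty$). This last remark lets me split every scalar estimate obtained from \eqref{affine-def} into one statement about $\phi_t$ and one about $\psi_t$.

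For (1), put $u=0$ in \eqref{affine-def}: since $X$ is conservative, $\e^{\phi_t(0)+\langle\psi_t(0),x\rangle}=\E_x[1]=1$ for all $x\in D$. Taking $x=0$ gives $\phi_t(0)=0$, after which $\langle\psi_t(0),x\rangle=0$ for all $x\in\Rp^d$ forces $\psi_t(0)=0$. For (2) and (3): convexity of $\I_T$ and of $\phi_t,\psi_t$ comes from H\"older's inequality, since for $u,v\in\I_T$ and $\lambda\in[0,1]$
\[\E_x\big[\e^{\langle\lambda u+(1-\lambda)v,\,X_t\rangle}\big]\le\E_x\big[\e^{\langle u,X_t\rangle}\big]^{\lambda}\E_x\big[\e^{\langle v,X_t\rangle}\big]^{1-\lambda}<\infty,\]
which shows $\lambda u+(1-\lambda)v\in\I_T$; rewriting both sides through \eqref{affine-def}, taking logarithms and using the splitting remark yields convexity of $\phi_t$ and of each component of $\psi_t$. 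For (3), if $u\le v$ in $\I_T$ then $\langle v-u,X_t\rangle\ge0$ because $X_t\in\Rp^d$, so $\e^{\langle u,X_t\rangle}\le\e^{\langle v,X_t\rangle}$, hence $\phi_t(u)+\langle\psi_t(u),x\rangle\le\phi_t(v)+\langle\psi_t(v),x\rangle$ for all $x\in D$, and the splitting remark gives $\phi_t(u)\le\phi_t(v)$ and $\psi_t(u)\le\psi_t(v)$.

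Statement (4) is the only delicate point. I would use the generalized Riccati equation \eqref{Ric-2}. On $\I_T^\circ$ the function $R$ is real-analytic (the integral term in \eqref{R-def} is a Laplace transform, smooth in the interior of its region of finiteness), hence $C^1$; therefore $u\mapsto\psi_t(u)$ is the time-$t$ flow map of the ODE $\dot y=R(y)$ on $\I_T^\circ$, and for each fixed $t\in[0,T]$ it is a $C^1$-diffeomorphism of $\I_T^\circ$ onto its image. Here one needs that $\psi_t$ maps $\I_T^\circ$ into $\I_T^\circ$ for $t\in[0,T]$, which I would take from \citeN{KellerRessel08}; this invariance (together with the regularity of $R$ near $\partial\I_T$) is the real obstacle, the rest being soft. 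Given it, the Jacobian $D\psi_t(u)$ is invertible for every $u\in\I_T^\circ$, so no row of it vanishes and no coordinate function $\psi_{t,j}$ is locally constant on $\I_T^\circ$. Now let $u<v$ in $\I_T^\circ$ and suppose $\psi_{t,j}(u)=\psi_{t,j}(v)$ for some $j$. Since $\I_T$ is a decreasing set ($X_t\ge0$), the box $\prod_i[u_i,v_i]$ lies in $\I_T$ and has non-empty interior contained in $\I_T^\circ$; by (3) the scalar function $\psi_{t,j}$ is non-decreasing in each argument, so it is squeezed between its equal values at $u$ and $v$ and is therefore constant on that box, contradicting the previous sentence. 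Hence $\psi_{t,j}(u)<\psi_{t,j}(v)$ for every $j$, i.e. $\psi_t(u)<\psi_t(v)$. (Alternatively one can bypass the diffeomorphism statement: the admissibility conditions \eqref{admissible-start}--\eqref{admissible-end} make $R$ quasimonotone, i.e. $R_j(w)$ is non-decreasing in $w_k$ for $k\neq j$, so along the compact trajectory the non-negative function $h_j(t):=\psi_{t,j}(v)-\psi_{t,j}(u)$ satisfies a differential inequality $\dot h_j\ge-L\,h_j$ and Gronwall gives $h_j(t)\ge(v_j-u_j)\e^{-Lt}>0$.)
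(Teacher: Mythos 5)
Your argument is correct, and parts (1)--(3) essentially coincide with the paper's own proof: the paper also uses H\"older's inequality for (2), and for (3) the monotonicity of the moment generating function on $\Rp^d$ together with exactly your splitting device (insert $x=0$, then $x=C\ecc$ with $C$ large); for (1) you argue directly from conservativity plus the splitting trick, whereas the paper notes $F(0)=R(0)=0$ and invokes uniqueness for the Riccati equations \eqref{Riccati} -- your route is marginally more elementary since it avoids uniqueness. The genuine difference is in (4). The paper's proof is precisely your parenthetical alternative: $R$ is quasi-monotone increasing (Lemma 4.6 of Keller-Ressel (2008), which is your computation from the admissibility conditions \eqref{admissible-start}--\eqref{admissible-end}) and locally Lipschitz on $\I_T^\circ$, and a comparison theorem for quasi-monotone ODEs (Walter, Section 10.XII) applied to \eqref{Ric-2} gives strict order preservation; your Gronwall estimate $h_j(t)\ge (v_j-u_j)\e^{-Lt}$ is in effect a self-contained proof of that comparison theorem. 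Your primary argument -- viewing $\psi_t$ as the time-$t$ flow of \eqref{Ric-2}, so that its Jacobian is everywhere invertible, and combining this with componentwise monotonicity from (3) on the order box $\prod_i[u_i,v_i]\subset\I_T$ -- is a genuinely different and rather elegant route: it does not use quasi-monotonicity at all (it would apply to any order-preserving flow with nondegenerate Jacobian), at the price of requiring smoothness of $R$ on $\I_T^\circ$ and, as you correctly flag, that the trajectories $s\mapsto\psi_s(u)$ remain in the open region where $R$ is $C^1$ up to time $T$. Note that the paper relies on the same invariance implicitly -- both its uniqueness remark after \eqref{Riccati} and its application of the comparison principle need the solutions to stay where $R$ is locally Lipschitz -- so deferring that point to Keller-Ressel (2008) leaves you on the same footing as the published proof.
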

\begin{proof}
From \eqref{F-def} and \eqref{R-def} it is immediately seen that $F(0)=R(0)=0$.
Thus $\phi_t(0) = \psi_t(0) = 0$ are solutions to the corresponding generalized
Riccati equations \eqref{Riccati}. Moreover, $0 \in \I_T^\circ$, such that the
solutions are unique, showing claim (1). Let $u, v \in \R^d$ and
$\lambda \in [0,1]$. By H\"o{}lder's inequality
\begin{align}
\E_x\big[\exp\left(\scal{\lambda u + (1 - \lambda) v}{X_t}\right)\big]
 \le \E_x\big[\e^{\scal{u}{X_t}}\big]^\lambda
     \cdot \E_x\big[\e^{\scal{v}{X_t}}\big]^{(1 - \lambda)}\;,
\end{align}
where both sides may take the value $+\infty$. Taking logarithms on both sides
shows that, for all $t \in [0,T]$, $\phi_t(\cdot)$ and $\psi_t(\cdot)$ are
(componentwise) convex functions on $\R^d$, taking values in the extended real
numbers $\R \cup \set{+\infty}$. This implies in particular that $\I_T$ is
convex, and that the restrictions of $\phi_t(\cdot)$ and $\psi_t(\cdot)$ to
$\I_T$ are finite convex functions, showing claim (2). Following
\citeN[Proposition 1.3(vii)]{KellerRessel08}, we have that for $u\le v$
\begin{align*}
\E_x\big[\e^{\langle u,X_t\rangle}\big]
 \le \E_x\big[\e^{\langle v,X_t\rangle}\big]
 < \infty,
\end{align*}
for all $x\in \Rp^d$. Now, using the affine property of the moment generating
function we get
\begin{align}
\phi_t(u) + \langle\psi_t(u),x\rangle
\le \phi_t(v) + \langle\psi_t(v),x\rangle,
\end{align}
whereby inserting first $x=0$ and then $x=C\ecc$, for $C>0$ arbitrarily large,
yields claim (3). Consider the Riccati differential equation \eqref{Ric-2},
satisfied by $\psi_t$. By \citeN{KellerRessel08}, Lemma~4.6, $R(u)$ is
quasi-monotone increasing; moreover, it is locally Lipschitz in $\I_T^\circ$. A
comparison principle for quasi-monotone ODEs (cf.
\citeNP{Walter96},~Section~10.XII) yields then directly that $u < v$ implies
$\psi_t(u) < \psi_t(v)$ for all $t \in [0,T]$.
\end{proof}

The above results on affine processes can be extended to the case when the
time-homogeneity assumption on the Markov process \prozess[X] is dropped, see
\citeN{Filipovic05}. The conditional moment generating function then takes the
form
\begin{align}\label{affine-conditional-inhomogeneous}
\E_x\left[\left.\exp\langle u,X_{r}\rangle\right|\F_s \right]
 = \exp\big( \phi_{s,r}(u) + \langle\psi_{s,r}(u),X_s\rangle \big),
\end{align}
for all $(s,r,u)$ such that $0 \le s \le r \le T$ and $u \in \I_T$, with
$\phi_{s,r}(u)$ and $\psi_{s,r}(u)$ now depending on both $s$ and $r$. Assuming
that $X$ satisfies the `strong regularity condition' (cf.~\citeNP{Filipovic05},
Definition~2.9), $\phi_{s,r}(u)$ and $\psi_{s,r}(u)$ satisfy generalized Riccati
equations with time-dependent right-hand sides:
\begin{align}
- \frac{\partial}{\partial s}\phi_{s,r}(u)
 &= F(s,\psi_{s,r}(u)),  \qquad \phi_{r,r}(u)=0, \label{Ric-TI-1}\\
- \frac{\partial}{\partial s}\psi_{s,r}(u)
 &= R(s,\psi_{s,r}(u)),  \qquad \psi_{r,r}(u)=u, \label{Ric-TI-2}
\end{align}
for all $0 \le s \le r \le T$ and $u \in \I_T$.

\section{Constructing Martingales $\geqslant1$}
\label{MMge1}

In this section we construct martingales that stay greater than one for all
times, up to a \emph{bounded} time horizon $T$, that is, from now on
$0<T<\infty$. The construction is a ``backward'' one, and utilizes the Markov
property of affine processes.

\begin{theorem}\label{M-is-martingale}
Let $X$ be an affine process satisfying Assumption $(\mathbf{A})$, and let $u
\in \I_T$. The process \prozess[M^u] defined by
\begin{align}\label{mg-1}
 M_t^u = \exp\big(\phi_{T-t}(u) + \langle\psi_{T-t}(u),X_t\rangle\big),
\end{align}
is a martingale. Moreover, if $u\in \I_T\cap\Rp^d$ then $M^u_t\geq1$ a.s. for
all $t\in[0,T]$, for any $X_0\in\Rp^d$.
\end{theorem}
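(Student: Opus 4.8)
The plan is to verify the martingale property directly from the conditional affine property and the semi-flow equations, and then obtain the lower bound from the monotonicity results of Lemma~\ref{positivity}. First I would check integrability and adaptedness: since $u\in\I_T$ and $\psi_{T-t}(u)\in\I_T$ for all $t$ (this requires a small argument, see below), the random variable $M_t^u$ is $\P_x$-integrable for every $x$, with $\E_x[M_t^u]=\exp(\phi_{T-t}(u)+\langle\psi_{T-t}(u),x\rangle)\cdot\ldots$ — more precisely, I would compute $\E_x[M_t^u]$ using \eqref{affine-def} and the semi-flow property \eqref{flow} to see it equals $\exp(\phi_T(u)+\langle\psi_T(u),x\rangle)$, which is finite and independent of $t$, giving integrability. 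Adaptedness is clear since $M_t^u$ is a deterministic function of $X_t$.

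For the martingale property proper, fix $0\le s\le t\le T$ and condition on $\F_s$. Using \eqref{affine-conditional} with the roles of the time arguments chosen so that we propagate $X_s$ forward by $t-s$, we get
\begin{align*}
\E_x\big[M_t^u\,\big|\,\F_s\big]
 &= \exp\big(\phi_{T-t}(u)\big)\,\E_x\big[\exp\langle\psi_{T-t}(u),X_t\rangle\,\big|\,\F_s\big]\\
 &= \exp\big(\phi_{T-t}(u) + \phi_{t-s}(\psi_{T-t}(u)) + \langle\psi_{t-s}(\psi_{T-t}(u)),X_s\rangle\big).
\end{align*}
Now I would invoke the semi-flow identities \eqref{flow}: $\phi_{(T-t)+(t-s)}(u)=\phi_{T-t}(u)+\phi_{t-s}(\psi_{T-t}(u))$ and $\psi_{(T-t)+(t-s)}(u)=\psi_{t-s}(\psi_{T-t}(u))$, so both exponents collapse to $\phi_{T-s}(u)+\langle\psi_{T-s}(u),X_s\rangle$, i.e. $\E_x[M_t^u\mid\F_s]=M_s^u$. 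The one point that needs care here is that \eqref{affine-conditional} is only stated for $u\in\I_T$, so I must confirm $\psi_{T-t}(u)\in\I_T$; this follows because $\E_x[\exp\langle\psi_{T-t}(u),X_{t-s}\rangle]$ appears (finitely) in the computation above — equivalently, from the characterization of $\I_T$ via finiteness of the relevant moment generating functions together with the semi-flow property. I'd state this as a short lemma-in-proof or cite it from the structure already laid out.

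For the lower bound, suppose $u\in\I_T\cap\Rp^d$. By Lemma~\ref{positivity}(1), $\phi_{T-t}(0)=0$ and $\psi_{T-t}(0)=0$; by Lemma~\ref{positivity}(3), $\phi$ and $\psi$ are order-preserving, so $u\ge0$ gives $\phi_{T-t}(u)\ge\phi_{T-t}(0)=0$ and $\psi_{T-t}(u)\ge\psi_{T-t}(0)=0$. Since $X$ takes values in $D=\Rp^d$, we have $\langle\psi_{T-t}(u),X_t\rangle\ge0$ as well, hence $\phi_{T-t}(u)+\langle\psi_{T-t}(u),X_t\rangle\ge0$ and therefore $M_t^u=\exp(\cdots)\ge1$ for all $t\in[0,T]$, $\P_x$-almost surely, for every $X_0=x\in\Rp^d$.

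The main obstacle is the mild technical point of ensuring all the moment generating functions invoked along the way are finite — that $\psi_{T-t}(u)$ stays in $\I_T$ so that \eqref{affine-conditional} and \eqref{affine-def} apply at each step. Everything else is a direct substitution using \eqref{flow} for the martingale property and Lemma~\ref{positivity} for the positivity, so no real computation is required.
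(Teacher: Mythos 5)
Your argument is correct, but it takes a more computational route than the paper. The paper's proof exploits the observation that, by \eqref{phi-psi-0}, $M_T^u=\e^{\langle u,X_T\rangle}$, so that \eqref{affine-conditional} gives $M_t^u=\E_x\big[\e^{\langle u,X_T\rangle}\big|\F_t\big]$ directly: the martingale property is then just the tower property applied to a single terminal random variable, whose integrability is exactly the statement $u\in\I_T$, and the bound $M_t^u\ge1$ follows immediately because $M_t^u$ is the conditional expectation of a random variable that is $\ge1$ when $u\ge0$ and $X_T\in\Rp^d$. You instead verify $\E_x[M_t^u|\F_s]=M_s^u$ by a two-step conditioning and then collapse the exponents via the semi-flow identities \eqref{flow}; this is legitimate, but it forces you to worry about whether $\psi_{T-t}(u)$ lies in the domain where \eqref{affine-conditional} may be applied over the horizon $t-s$. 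Your justification of that point is slightly circular as written (you argue the relevant moment generating function "appears finitely in the computation", which presupposes the computation is valid); the clean fix is either to invoke the flow property exactly as stated in the paper for $0\le t+s\le T$ and $u\in\I_T$, which already asserts the compositions are well defined, or simply to condition on $\F_s$ in one step over the horizon $T-s$ -- which is precisely the paper's argument and removes the issue altogether. Your derivation of the lower bound from Lemma~\ref{positivity}(1),(3) together with $X_t\in\Rp^d$ is correct and self-contained, just heavier than the paper's one-line conditional-expectation argument; it does have the small side benefit of exhibiting the exponent itself as non-negative.
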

\begin{proof}
First, we show that $M^u$ is a martingale; for all $u \in \I_T$ it holds that
\begin{align*}
 \E_x[M^u_T] = \E_x[\e^{\langle u,X_T\rangle}] <\infty.
\end{align*}
Moreover, using \eqref{phi-psi-0} and \eqref{affine-conditional}, we have that:
\begin{align*}
\E_x\big[M_T^u|\F_t\big]
 &= \E_x\big[\exp\langle u,X_T\rangle|\F_t\big] \\
 &= \exp\big(\phi_{T-t}(u) + \langle\psi_{T-t}(u),X_t\rangle\big)
  = M_t^u.
\end{align*}
Regarding the assertion that $M_t^u \ge 1$ for all $t\in[0,T]$, it suffices to
note that if $u\in \I_T\cap\Rp^d$, then $M_t^u$ is the conditional expectation
of a random variable greater than, or equal to, one, i.e.
\begin{align}
M_t^u = \E_x\big[\exp\langle u,X_T\rangle\big|\F_t\big] \ge1,
\end{align}
hence greater than, or equal to, one itself.
\end{proof}

\begin{remark}
Actually, the same construction would create martingales for any \emph{Markov}
process on a general state space. Indeed, let $X$ be a Markov process with state
space $\R^e\times\Rp^d$ and consider the random variable
$Y_T^u=\e^{\langle u,X_T\rangle}$. The tower property of conditional
expectations yields that the process \prozess[M^u] with
\begin{align}
 M^u_t = \E\big[Y^u_T|\F_t\big] = \E\big[\e^{\langle u,X_T\rangle}|\F_t\big]
\end{align}
is a martingale. However, taking the positive orthant as state space guarantees
that the martingales stay greater than one. In addition, taking an affine
process as the driving motion provides the appropriate trade-off between rich
structural properties and analytical tractability.
\end{remark}

\begin{example}[\lev processes]
Assume that the affine process $X$ is actually a \textit{\lev subordinator},
with cumulant generating function $\kappa$. Then, we know that
\begin{align}
\phi_t(u)=t\cdot\kappa(u)
\quad \text{ and } \quad
\psi_t(u)=u.
\end{align}
Hence, the exponential martingale in \eqref{mg-1} takes the form:
\begin{align}\label{mg-lev}
M_t^u
 &= \exp\big( \phi_{T-t}(u) + \langle\psi_{T-t}(u),X_t\rangle \big) \nonumber\\
 &= \exp\big( (T-t)\kappa(u) + \langle u,X_t\rangle \big),
\end{align}
which is a martingale by standard results for \lev processes. Moreover, for
$u\in\I_T$, since $\kappa:\I_T\to\Rp$ and $T-t\ge0$, we get that $M_t^u\ge1$ for
all $t\in[0,T]$.
\end{example}

\begin{remark}
These considerations show that the affine \lib model will contain the
\lev forward price model of \citeN{EberleinOezkan05} and \citeN{Kluge05} as a
special case, if we consider a time-inhomogeneous affine process with state
space $\R^d$ as driving motion. Of course, in that case the martingales $M^u$
will \emph{not} be greater than one.
\end{remark}

Note that there is still some ambiguity lurking in the specification of the
martingale $M^u$: consider a $d$-dimensional driving process $X$, from which the
martingale $M^u$ is constructed. Let $c$ be a positive semidefinite $d \times d$
matrix, and $c'$ its transpose. Define $\widetilde{X} = c \cdot X$ and let
$\widetilde{M}^u$ be the corresponding martingale. It is easy to check that if
$X$ is an affine process satisfying condition $\mathbf{A}$, then so is
$\widetilde{X}$. It holds that
\begin{equation*}
M_t^{c'u}
 = \E_x\big[\exp\langle c'u,X_T\rangle\big|\F_t\big]
 = \E_x\big[\exp\langle u,c X_T\rangle\big|\F_t\big]
 = \widetilde{M}^u_t,
\end{equation*}
showing that in terms of the martingales $M^{u}$, a (positive) linear
transformation $c$ of the underlying process $X$ is simply equivalent to the
transposed linear transformation $c'$ of the parameter $u$. In order to avoid
this ambiguity in the specification of the martingale $M^u$, we will fix from
now on the initial value of the process $X$ at some \textit{strictly positive},
\emph{canonical} value, e.g. $\mathbf{1} = (1,\dots,1)$.

Finally, the following definition will be needed later.
\begin{definition}\label{def-gamma}
For any process \prozess[X] satisfying Assumption $(\mathbf{A})$, define
\begin{align}
\gamma_X
 = \sup_{u\in\I_T\cap\R^d_{>0}} \E_{\mathbf{1}}\big[\e^{\scal{u}{X_T}}\big].
\end{align}
\end{definition}

\section{The affine LIBOR models}
\label{new_approach}

Now, we describe our proposed approach to modeling LIBOR rates that aims at
combining the advantages of both the LIBOR and the forward price approach; that
is, a framework that produces \textit{non-negative} LIBOR rates in an
\textit{analytically tractable} model.

Consider a discrete tenor $0=T_0<T_1<T_2<\cdots<T_N = T$ and an initial tenor
structure of \emph{non-negative} LIBOR rates $L(0,T_k)$, $k\in\set{1,\dots,N}$.
We have that discounted traded assets (bonds) are martingales with respect to
the terminal martingale measure, i.e.
\begin{align}
 \frac{B(\cdot,T_k)}{B(\cdot,T_N)} \in \mathcal{M}(\P_{T_N}),
 \qquad\text{for all}\; k\in\{1,\dots,N-1\}.
\end{align}
In the \textit{affine \lib model} we model quotients of bond prices using the
martingales $M^u$ defined in Theorem~\ref{M-is-martingale} as follows:
\begin{subequations}\label{ALM}
\begin{align}
\frac{B(t,T_1)}{B(t,T_N)} &= M_t^{u_1} \label{mg-2}\\
 &\vdots\nonumber\\
\frac{B(t,T_{N-1})}{B(t,T_N)} &= M_t^{u_{N-1}}, \label{mg-N}
\end{align}
\end{subequations}
for all $t\in[0,T_1],\dots,t\in[0,T_{N-1}]$ respectively. The initial values of
the martingales $M^{u_k}$ must satisfy:
\begin{align}\label{initial-cond}
M^{u_k}_0
 = \exp\big(\phi_T(u_k) + \big\langle\psi_T(u_k),x\big\rangle\big)
 = \frac{B(0,T_k)}{B(0,T_N)}
\end{align}
for all $k\in\{1,\dots,N-1\}$. Obviously we set
$u_N=0\Leftrightarrow M_0^{u_N}=\frac{B(0,T_N)}{B(0,T_N)}=1$.

Next, we show that under mild conditions on the underlying process $X$, an
affine LIBOR model can fit \textit{any} given term structure of initial LIBOR
rates through the parameters $u_1,\dots,u_N$.

\begin{proposition}\label{initial-fit}
Suppose that $L(0,T_1), \dotsc, L(0,T_N)$ is a tenor structure of non-negative
initial LIBOR rates, and let $X$ be a process satisfying assumption
$(\mathbf{A})$, starting at the canonical value $\mathbf{1}$. The following
hold:
\begin{enumerate}
\item If $\gamma_X>B(0,T_1)/B(0,T_N)$, then there exists
 a decreasing sequence \linebreak
 $u_1\ge u_2\ge\dots\ge u_N =0$ in $\I_T\cap\Rp^d$, such that
 \begin{equation}\label{martingale-initial}
  M_0^{u_k} = \frac{B(0,T_k)}{B(0,T_N)},
   \qquad \text{for all }\, k\in\set{1,\dotsc,N}.
 \end{equation}
 In particular, if $\gamma_X=\infty$, then the affine LIBOR model
 can fit \emph{any} term structure of non-negative initial LIBOR
 rates.
\item If $X$ is one-dimensional, the sequence
 $(u_k)_{k\in\set{1,\dotsc,N}}$ is \emph{unique}.
\item If all initial LIBOR rates are positive, the sequence
 $(u_k)_{k\in\set{1,\dotsc,N}}$ is \emph{strictly} decreasing.
\end{enumerate}
\end{proposition}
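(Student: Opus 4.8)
The plan is to reduce everything to the behaviour of the scalar function $u \mapsto M_0^u = \exp(\phi_T(u) + \langle\psi_T(u),\mathbf{1}\rangle)$ on the ``diagonal ray'' $\I_T\cap\R^d_{>0}$, and to exploit the monotonicity and continuity properties of $\phi$ and $\psi$ established in Lemma~\ref{positivity}. First I would observe that $M_0^0 = \exp(\phi_T(0)+\langle\psi_T(0),\mathbf{1}\rangle) = e^{\langle 0,\mathbf{1}\rangle}=1$ by part~(1) of Lemma~\ref{positivity}, while the supremum of $M_0^u$ over $u\in\I_T\cap\R^d_{\ge 0}$ equals $\gamma_X$ by Definition~\ref{def-gamma} (using $\E_{\mathbf{1}}[\e^{\langle u,X_T\rangle}] = M_0^u$). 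Since a non-negative initial tenor structure means $B(0,T_1)\ge B(0,T_2)\ge\dots\ge B(0,T_N)$, i.e. the target ratios $r_k := B(0,T_k)/B(0,T_N)$ form a decreasing sequence $r_1\ge r_2\ge\dots\ge r_N = 1$ with $r_1 < \gamma_X$ by hypothesis, it suffices to find, for each $k$, some $u_k\in\I_T\cap\R^d_{\ge0}$ with $M_0^{u_k}=r_k$, and then to arrange that the $u_k$ decrease.

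For (1), the key point is an intermediate-value argument \emph{along a ray}: fix a direction, say $\mathbf{1}$, and consider the map $\lambda\mapsto M_0^{\lambda\mathbf{1}}$ for $\lambda\ge 0$ with $\lambda\mathbf{1}\in\I_T$. This map is continuous (the functions $\phi_T,\psi_T$ are continuous on $\I_T$, being finite convex functions by Lemma~\ref{positivity}(2)), it equals $1$ at $\lambda=0$, and it is non-decreasing in $\lambda$ by the order-preserving property Lemma~\ref{positivity}(3). By the definition of $\gamma_X$ as a supremum over the whole set $\I_T\cap\R^d_{>0}$, and a short argument that the supremum is already attained (or approached) along the ray — or, more carefully, by first picking a point $u^\star$ with $M_0^{u^\star}$ close to $\gamma_X$ and using that the segment $[0,u^\star]\subset\I_T$ by convexity — one gets that the ray image $\{M_0^{\lambda\mathbf{1}} : \lambda\mathbf{1}\in\I_T\}$ is an interval containing $[1,\gamma_X)$. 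Hence for every target $r_k\in[1,\gamma_X)$ there is $\lambda_k$ with $M_0^{\lambda_k\mathbf 1}=r_k$; setting $u_k := \lambda_k\mathbf 1$ and noting that $r_1\ge\dots\ge r_N$ forces (by monotonicity of the ray map, after possibly choosing the $\lambda_k$ consistently) $\lambda_1\ge\dots\ge\lambda_N=0$, we obtain the desired decreasing sequence in $\I_T\cap\R^d_{\ge0}$; and $\gamma_X=\infty$ makes the interval $[1,\infty)$, covering any tenor structure. For (2), in dimension $d=1$ the set $\I_T$ is an interval and $u\mapsto M_0^u$ is \emph{strictly} increasing: this follows from Lemma~\ref{positivity}(4), which gives $\psi_T(u)<\psi_T(v)$ for $u<v$ in $\I_T^\circ$, hence $M_0^u<M_0^v$; strict monotonicity yields uniqueness of each $u_k$. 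For (3), if all $L(0,T_k)>0$ then the ratios are \emph{strictly} decreasing, $r_1>r_2>\dots>r_N=1$, and again strict monotonicity of $M_0^{\lambda\mathbf 1}$ in $\lambda$ (from Lemma~\ref{positivity}(4), valued on the interior) forces $\lambda_1>\lambda_2>\dots>\lambda_N=0$.

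The main obstacle I anticipate is the passage from the supremum $\gamma_X$ — which is a priori taken over \emph{all} of $\I_T\cap\R^d_{>0}$, a potentially complicated convex region — to the surjectivity statement onto $[1,\gamma_X)$ along a single ray. One has to be a little careful that values close to $\gamma_X$ are indeed realized (or approximated) on the chosen ray; the clean way is: given $r<\gamma_X$, pick $u^\star\in\I_T\cap\R^d_{>0}$ with $M_0^{u^\star}>r$, note $[0,u^\star]\subset\I_T$ by convexity (Lemma~\ref{positivity}(2)), restrict $M_0^{\cdot}$ to this segment where it is continuous and runs from $1$ to $M_0^{u^\star}>r$, and apply the intermediate value theorem to hit $r$; then observe that the resulting points for the decreasing targets $r_1\ge\dots\ge r_N$ can be taken on a common segment (e.g. $[0,u_1]$), where monotonicity along the segment gives the ordering $u_1\ge\dots\ge u_N$. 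The remaining verifications — continuity of $\phi_T,\psi_T$, the values at $0$, and the translation of ``non-negative LIBOR rates'' into ``decreasing bond-price ratios bounded by $\gamma_X$'' — are routine given the earlier results.
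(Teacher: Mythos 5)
Your proposal is correct and follows essentially the same route as the paper: choose $u_+\in\I_T\cap\R^d_{>0}$ with $M_0^{u_+}>B(0,T_1)/B(0,T_N)$ (possible since this ratio lies below $\gamma_X$), apply the intermediate value theorem to the increasing map $\xi\mapsto M_0^{\xi u_+}$ on $[0,1]$, and use monotonicity (strict monotonicity, from Lemma~\ref{positivity}) for the ordering in (1) and (3), and the one-dimensionality of $\I_T\cap\Rp$ for uniqueness in (2). The only minor difference is the continuity justification: the paper obtains continuity of $\xi\mapsto M_0^{\xi u_+}$ from monotone convergence plus Fatou's lemma, which also covers the endpoint $\xi=1$ when $u_+$ lies on the boundary of $\I_T$, whereas the convexity of $\phi_T,\psi_T$ that you invoke only guarantees continuity at interior points, so your endpoint step should be patched exactly by that monotone-convergence argument.
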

\begin{proof}
The non-negativity of (initial) LIBOR rates clearly implies that
\begin{align*}
\frac{B(0,T_1)}{B(0,T_N)}
 \ge \frac{B(0,T_2)}{B(0,T_N)}
 \ge \dotsm
 \ge \frac{B(0,T_N)}{B(0,T_N)} = 1.
\end{align*}
Moreover, if the initial LIBOR rates are positive the above inequalities become
strict. Now let $\epsilon > 0$, small enough such that
$\gamma_X-\epsilon>\frac{B(0,T_1)}{B(0,T_N)}$. Clearly, by the definition of
$\gamma_X$, we can find some $u_+ \in \I_T 0$ such that
\begin{align*}
\E_{\mathbf{1}}\left[\e^{\scal{u_+}{X_T}}\right]
 > \gamma_X  - \epsilon
 > \frac{B(0,T_1)}{B(0,T_N)}.
\end{align*}
Define now
\begin{equation}\label{auxilliary-function}
f: [0,1] \to \Rp, \quad
  \xi \mapsto \E_{\mathbf{1}}\left[\e^{\scal{\xi u_+}{X_T}}\right]
               = M_0^{\xi u_+}.
\end{equation}
By monotone convergence and dominated convergence, $f$ is a continuous,
increasing function satisfying $f(0)=1$ and $f(1)>\frac{B(0,T_1)}{B(0,T_N)}$.
Consequently, there exist numbers $0=\xi_N\le\xi_{N-1}\le\dotsc\le\xi_1< 1$,
such that
\begin{align*}
f(\xi_k)
 = M_0^{\xi u_+}
 = \frac{B(0,T_k)}{B(0,T_N)},
 \quad \text{for all}\;k \in \set{1,\dotsc,N}.
\end{align*}
Setting $u_k = \xi_k u_+$, we have shown \eqref{martingale-initial}. By
Lemma~\ref{positivity}, $f(\xi)$ is in fact a strictly increasing function. If
also the (quotients of) bond prices satisfy strict inequalities, we deduce that
the sequence $(u_k)_{k\in\set{1,\dots,N}}$ is strictly decreasing, showing claim
(3). Finally, if $X$ is one-dimensional, then $\I_T\cap\Rp$ is just a
sub-interval of the positive half-line; thus any choice of $u_+$, will lead to
the same parameters $u_k$, showing (2).
\end{proof}

In the affine LIBOR model, forward prices have the following dynamics:
\begin{align}\label{mg-3}
\frac{B(t,T_k)}{B(t,\tk[k+1])}
 &= \frac{B(t,T_k)}{B(t,T_N)}\frac{B(t,T_N)}{B(t,\tk[k+1])}
  = \frac{M_t^{u_k}}{M_t^{u_{k+1}}} \nonumber\\
 &= \exp\Big(\phi_{T_N-t}(u_k)- \phi_{T_N-t}(u_{k+1})\nonumber\\
 &\qquad\qquad
  + \big\langle\psi_{T_N-t}(u_k)-\psi_{T_N-t}(u_{k+1}),X_t\big\rangle\Big)
\nonumber\\
 &= \exp\Big(A_{T_N-t}(u_k,u_{k+1})
     + \big\langle B_{T_N-t}(u_k,u_{k+1}),X_t\big\rangle\Big),
\end{align}
where we have defined
\begin{subequations}
\begin{align}
 A_{T_N-t}(u_k,u_{k+1})
 := \phi_{T_N-t}(u_k)- \phi_{T_N-t}(u_{k+1}),\\
 B_{T_N-t}(u_k,u_{k+1})
 := \psi_{T_N-t}(u_k)- \psi_{T_N-t}(u_{k+1}).
\end{align}
\end{subequations}

Using Proposition \ref{initial-fit}(1) and Lemma \ref{positivity}(3), we
immediately deduce the following result, which shows that Axiom 1 is satisfied:

\begin{proposition}\label{libor-positiv}
Suppose that $L(0,T_1), \dotsc, L(0,T_N)$ is a tenor structure of non-negative
initial LIBOR rates, and let $X$ be a process satisfying assumption
$(\mathbf{A})$. Let the bond prices be given by \eqref{ALM} and satisfy the
initial conditions \eqref{initial-cond}. Then the LIBOR rates $L(t,T_k)$ are
\emph{non-negative} a.s., \emph{for all} $t\in[0,T_k]$ and $k \in
\set{1,\dotsc,N}$.
\end{proposition}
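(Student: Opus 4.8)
The plan is to derive the result directly from the formula for forward prices in \eqref{mg-3} together with the order-preserving property of $\psi$ (Lemma~\ref{positivity}(3)) and the initial fit from Proposition~\ref{initial-fit}(1). Recall that $1+\delta L(t,T_k) = B(t,T_k)/B(t,T_{k+1}) = M_t^{u_k}/M_t^{u_{k+1}}$, so it suffices to show that $M_t^{u_k} \geq M_t^{u_{k+1}}$ a.s.\ for all $t \in [0,T_k]$ and all $k \in \set{1,\dotsc,N-1}$ (the case $k=N$ being trivial since $L(t,T_N)$ is defined via $M^{u_N}\equiv 1$ and $u_{N-1}\ge u_N$ handles it, or one simply notes $T_N$ is the terminal date).

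The key step is the monotonicity in the $u$-argument. By Proposition~\ref{initial-fit}(1) we may choose the sequence $u_1 \ge u_2 \ge \dots \ge u_N = 0$ in $\I_T \cap \Rp^d$ (componentwise), matching the given initial term structure. Fix $k$ and fix $t \in [0,T_k]$. Since $u_k \ge u_{k+1}$ componentwise and both lie in $\I_T$, Lemma~\ref{positivity}(3) applied with time parameter $T_N - t \in [0,T]$ gives
\begin{align*}
\phi_{T_N-t}(u_k) \ge \phi_{T_N-t}(u_{k+1})
\quad\text{and}\quad
\psi_{T_N-t}(u_k) \ge \psi_{T_N-t}(u_{k+1}).
\end{align*}
Because $X_t \in \Rp^d$ almost surely (the state space is $D = \Rp^d$), the inner product $\scal{\psi_{T_N-t}(u_k) - \psi_{T_N-t}(u_{k+1})}{X_t}$ is almost surely non-negative. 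Hence $A_{T_N-t}(u_k,u_{k+1}) + \scal{B_{T_N-t}(u_k,u_{k+1})}{X_t} \ge 0$ a.s., and exponentiating in \eqref{mg-3} yields $\frac{B(t,T_k)}{B(t,T_{k+1})} \ge 1$ a.s., i.e.\ $\delta L(t,T_k) = \frac{B(t,T_k)}{B(t,T_{k+1})} - 1 \ge 0$ a.s.

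I do not expect a serious obstacle here; the proof is essentially an application of results already established. The one point requiring a little care is the quantifier order: we want the null set to be uniform in $t$, i.e.\ \emph{almost surely, for all} $t$, rather than merely \emph{for all} $t$, \emph{almost surely}. This follows because $M^{u_k}$ and $M^{u_{k+1}}$ have c\`adl\`ag versions (they are conditional expectations of integrable random variables and hence martingales admitting c\`adl\`ag modifications), so the set where $M_t^{u_k} \ge M_t^{u_{k+1}}$ fails for some $t$ is already covered by a single null set obtained from a countable dense set of times. Alternatively, one can invoke that $M^{u_k}/M^{u_{k+1}}$, being a ratio of the form $\E_x[\e^{\scal{u_k}{X_T}}\mid \F_t]/\E_x[\e^{\scal{u_{k+1}}{X_T}}\mid\F_t]$ with $\e^{\scal{u_k}{X_T}} \ge \e^{\scal{u_{k+1}}{X_T}}$ pointwise (since $u_k \ge u_{k+1}$ and $X_T \in \Rp^d$), is itself a non-negative ``likelihood-ratio''-type process bounded below by $1$; this gives the a.s.\ bound directly without needing the explicit affine formula, though the explicit formula is what makes the model tractable. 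Either way, the claim follows, and Axiom~1 is verified.
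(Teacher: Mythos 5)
Your argument is correct and is essentially the paper's own (one-line) deduction: the paper obtains the result immediately from Proposition~\ref{initial-fit}(1), which gives the componentwise decreasing sequence $u_1\ge\dots\ge u_N=0$ in $\I_T\cap\Rp^d$, and Lemma~\ref{positivity}(3), which makes $A_{T_N-t}(u_k,u_{k+1})\ge 0$ and $B_{T_N-t}(u_k,u_{k+1})\ge 0$ in \eqref{mg-3}, so that the forward price is $\ge 1$ since $X_t\in\Rp^d$. Your extra discussion of the null-set uniformity in $t$ is harmless but not needed, because $M^{u}$ is defined in \eqref{mg-1} as an explicit deterministic function of $X_t$, so the inequality $M_t^{u_k}\ge M_t^{u_{k+1}}$ holds pointwise for every $t$ and every path with values in $\Rp^d$.
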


Moreover, forward prices should be martingales with respect to their
corresponding forward measures, that is
\begin{align}
 \frac{B(\cdot,T_k)}{B(\cdot,\tk[k+1])}
 \in\mathcal{M}(\P_{T_{k+1}}),
 \qquad\text{for all}\; k\in\{1,\dots,N-1\};
\end{align}
this we can easily deduce in our modeling framework. Forward measures are
related to each other via forward processes, hence in the present framework
forward measures are related to one another via quotients of the martingales
$M^u$. Indeed, we have that
\begin{align}\label{Pk-to-next}
\frac{\dd \P_{\tk[k]}}{\dd \P_{\tk[k+1]}} \Big|_{\F_t}
  = \frac{F(t,\tk[k],\tk[k+1])}{F(0,\tk[k],\tk[k+1])}
  = \frac{B(0,\tk[k+1])}{B(0,\tk[k])} \times \frac{M_t^{u_{k}}}{M_t^{u_{k+1}}}
\end{align}
for any $k\in\{1,\dots,N-1\}, t\in[0,\tk[k]]$. Then, using Proposition III.3.8
in \citeN{JacodShiryaev03} we can easily deduce that $L(\cdot,T_k)$ is a
martingale under the forward measure $\P_{\tk[k+1]}$, since the successive
densities from $\P_{\tk[k+1]}$ to $\P_{\tk[N]}$ yield a ``telescoping'' product
and a $\P_{\tk[N]}$ martingale. We have that
\begin{align}
 1+\delta L(\cdot,T_{k})
 = \frac{B(\cdot,\tk[k])}{B(\cdot,\tk[k+1])}
 = \frac{M^{u_{k}}}{M^{u_{k+1}}}
 \in \mathcal{M}(\P_{\tk[k+1]})
\end{align}
since
\begin{align}
\frac{M^{u_{k}}}{M^{u_{k+1}}} \cdot \frac{\dd \P_{\tk[k+1]}}{\dd \P_{\tk[N]}}
\stackrel{\eqref{Pk-to-next}}{=}
\frac{M^{u_{k}}}{M^{u_{k+1}}}
 \prod_{l=k+1}^{N-1} \frac{M^{u_{l}}}{M^{u_{l+1}}}
 &= M^{u_{k}}
 \in \mathcal{M}(\P_{\tk[N]})
\end{align}
by the construction of the model. Hence Axiom 2 is also satisfied.

In addition, we get that the density between the $\P_{\tk[k]}$-forward measure
and the terminal forward measure $\P_{\tk[N]}$ is given by the martingale
$M^{u_k}$, as the defining equations \eqref{ALM} already dictate; we have
\begin{align}\label{Pk-to-final}
\frac{\dd \P_{\tk[k]}}{\dd \P_{\tk[N]}} \Big|_{\F_t}
  = \frac{B(0,\tk[N])}{B(0,\tk[k])} \times \frac{B(t,\tk[k])}{B(t,\tk[N])}
  = \frac{B(0,\tk[N])}{B(0,\tk[k])} \times M_t^{u_{k}} =
\frac{M_t^{u_k}}{M_0^{u_k}}.
\end{align}
This we can also deduce by expanding the densities between $\P_{\tk[k]}$ and
$\P_{\tk[N]}$.

Next, we wish to show that the model structure is preserved under any forward
measure. Indeed, changing from the terminal to the forward measure $X$ becomes
a time-inhomogeneous Markov process, but the affine property of its moment
generating function is preserved. This means that Axiom 3 is satisfied in full
strength: $X$ will be a \emph{time-inhomogeneous affine} process under \emph{any
forward measure}. In order to show this, we calculate the conditional moment
generating function of $X_r$ under the forward measure $\P_{\tk[k]}$, and get
that
\begin{align}\label{Pk-mgf}
&\E_{\pt[k]}\Big[\e^{\langle v,X_r\rangle}\big|\F_s\Big] \nonumber\\
 &= \E_{\pt}\bigg[\frac{M_{r}^{u_{k}}}{M_s^{u_k}}\e^{\langle
v,X_{r}\rangle}\big|\F_s\bigg] \nonumber\\
 &= \frac{1}{M_s^{u_k}}
    \E_{\pt}\Big[\exp\big(\phi_{T_N-r}(u_{k}) +
\langle\psi_{T_N-r}(u_{k}),X_r\rangle
                          + \langle v,X_r\rangle\big)\big|\F_s\Big] \nonumber\\
 &= \exp\big(- \phi_{T_N-s}(u_k) - \scal{\psi_{T_N-s}(u_k)}{X_s} +
\phi_{T_N-r}(u_{k})\big) \nonumber\\
 &\qquad\times
\E_{\pt}\Big[\exp\big(\langle\psi_{T_N-r}(u_{k})+v,X_r\rangle\big)\big|\F_s\Big]
\nonumber\\
 &= \exp\Big(\phi_{T_N-r}(u_{k}) - \phi_{T_N-s}(u_k) +
\phi_{r-s}(\psi_{T_N-r}(u_{k})+ v)\nonumber\\
 &\qquad\qquad
             + \scal{\psi_{r-s}(\psi_{T_N-r}(u_{k})+ v) -
\psi_{T_N-s}(u_k)}{X_s}\Big)\nonumber\\
 &\stackrel{\eqref{flow}}{=}
   \exp\Big(\phi_{r-s}(\psi_{T_N-r}(u_{k})+ v) -
\phi_{r-s}(\psi_{T_N-r}(u_k))\nonumber\\
 &\qquad\qquad
             + \scal{\psi_{r-s}(\psi_{T_N-r}(u_{k})+ v) -
\psi_{r-s}(\psi_{T_N-r}(u_k))}{X_s}\Big),
\end{align}
which yields the affine property of $X$ under the forward measure $\pt[k]$, for
any $k\in\{1,\dots,N-1\}$. In particular, setting $s=0$, $r = t$, we get that
\begin{equation}\label{Pk-mgf-2}
\E_{\pt[k]}\big[\e^{\langle v,X_{t}\rangle}\big] =
\exp\left(\phi^k_t(v) + \scal{\psi^k_t(v)}{x}\right)\;,
\end{equation}
where
\begin{subequations}\label{Pk-mgf-0}
\begin{align}
\phi^k_t(v) &:= \phi_t(\psi_{T_N-t}(u_{k})+ v) - \phi_t(\psi_{T_N-t}(u_k)),
\label{Pk-mgf-3}\\
\psi^k_t(v) &:= \psi_t(\psi_{T_N-t}(u_{k})+ v) - \psi_t(\psi_{T_N-t}(u_k)),
\label{Pk-mgf-4}
\end{align}
\end{subequations}
showing clearly that the measure change from $\pt[k]$ to $\pt$ is an exponential
tilting (or Esscher transformation). Furthermore, we can calculate from
\eqref{Pk-mgf} the functions $F^k(r,v)$ and $R^k(r,v)$, characterizing the
time-inhomogeneous affine process $X$ under the forward measure $\pt[k]$:
\begin{align}
F^k(r,v)
 &= - \frac{\partial}{\partial s} \phi_{r-s}(\psi_{T- r}(u_k) + v)\big|_{s=r}
    + \frac{\partial}{\partial s} \phi_{r-s}(\psi_{T- r}(u_k)) \big|_{s=r}
\nonumber\\
 &= F\left(\psi_{T-r}(u_k) + v\right) - F\left(\psi_{T-r}(u_k)\right),
\intertext{and}
R^k(r,v)
 &= - \frac{\partial}{\partial s} \psi_{r-s}(\psi_{T- r}(u_k) + v)\big|_{s=r}
    + \frac{\partial}{\partial s} \psi_{r-s}(\psi_{T- r}(u_k)) \big|_{s=r}
\nonumber\\
 &= R\left(\psi_{T-r}(u_k) + v\right) -  R\left(\psi_{T-r}(u_k)\right).
\end{align}

Note that the moment generating function in \eqref{Pk-mgf-2} is well defined for
all $v\in\mathcal{I}^k$, where
\begin{align*}
\mathcal{I}^k
 = \big\{ v\in\R^d:\,\, v+\psi_{T_N-t}(u_k)\in\mathcal{I}_T, \,\, t\in[0,T_k]
\big\}.
\end{align*}

Finally, we would like to calculate the moment generating function for the
dynamics of forward prices under their corresponding forward measures. Let us
use the following shorthand notation for \eqref{mg-3}
\begin{align}\label{simple}
& \frac{M_{T_k}^{u_k}}{M_{T_k}^{u_{k+1}}}
  = \e^{A_k + B_k\cdot X_{T_k}},
\end{align}
where
\begin{equation}
\begin{split}
 A_k & := A_{T_N-T_k}(u_k,u_{k+1})
 = \phi_{T_N-T_k}(u_k)- \phi_{T_N-T_k}(u_{k+1}),\\
 B_k & := B_{T_N-T_k}(u_k,u_{k+1})
 = \psi_{T_N-T_k}(u_k)- \psi_{T_N-T_k}(u_{k+1}),
\end{split}
\end{equation}
for any $k\in\{1,\dots,N-1\}$. Then, using \eqref{Pk-mgf-2} we get that
\begin{align}\label{MGF-FOR-tk}
&\E_{\pt[k+1]}\big[\e^{v(A_k + B_k\cdot X_{T_k})}\big] \nonumber\\
 &= \e^{vA_k}\E_{\pt[k+1]}\big[\e^{\langle vB_k,X_{T_k}\rangle}\big] \nonumber\\
 &= \e^{vA_k}\exp\Big(\phi_{T_k}(\psi_{T_N-T_k}(u_{k+1})+ vB_k) -
                      \phi_{T_k}(\psi_{T_N-T_k}(u_{k+1}))\nonumber\\
 &\qquad\qquad\quad
             + \langle\psi_{T_k}(\psi_{T_N-T_k}(u_{k+1})+ vB_k) -
                 \psi_{T_k}(\psi_{T_N-T_k}(u_{k+1})),x\rangle\Big)\nonumber\\
 &\stackrel{\eqref{flow}}{=}
  \frac{B(0,T_N)}{B(0,T_{k+1})}\times
    \exp\Big( v\phi_{T_N-T_k}(u_k) + (1-v)\phi_{T_N-T_k}(u_{k+1})  \\ \nonumber
 &\qquad\qquad\qquad\qquad
    + \phi_{T_k}\big(v\psi_{T_N-T_k}(u_k) + (1-v)\psi_{T_N-T_k}(u_{k+1})\big)\\
 &\qquad\qquad\qquad\qquad
   + \big\langle\psi_{T_k}\big(v\psi_{T_N-T_k}(u_k) +
   (1-v)\psi_{T_N-T_k}(u_{k+1})\big),x\big\rangle \Big).  \nonumber
\end{align}
Note that the moment generating function is again exponentially-affine in the
initial value $X_0=x$. Here, the moment generating function in
\eqref{MGF-FOR-tk} is well defined for all $v\in\mathcal{J}^k$, where
\begin{align*}
\mathcal{J}^k
 = \big\{ v\in\R:\,\,
  v\psi_{T_N - T_k}(u_k)+(1-v)\psi_{T_N - T_k}(u_{k+1})\in\mathcal{I}_T, \big\}.
\end{align*}

Concluding, we have shown that forward prices are of exponential-affine form
under \emph{any} forward measure and the model structure is always preserved. As
a consequence, the model is \emph{analytically tractable} in the sense of Axiom
3 with respect to all forward measures.

\begin{remark}
Note that for the model to make sense and be easy to use and implement we must
know the functions $\phi$ and $\psi$ \emph{explicitly}, and not only implicitly
as solutions of Riccati ODEs.
\end{remark}

\begin{remark}
A particular feature of affine LIBOR models based on one-dimensional driving
processes is that the LIBOR rate $ L(t,T_k) $ is bounded from below by
$\frac{1}{\delta}(\exp[A_{T_N-t}(u_k,\uk)] -1)$. This is undesirable, but a
negligible failure, since usually the quantity $A_{T_N-t}(u_k,\uk)$ is close to
$0$.
\end{remark}

\section{Interest rate derivatives}
\label{derivatives}

The most liquid interest rate derivatives are caps, floors and swaptions. In
practice LIBOR models are typically calibrated to the implied volatility
surface of caps and at-the-money swaptions, and then hedging strategies and
prices of exotic options are deduced. Thus, it is important to have fast
valuation formulas for these options so that the model can be calibrated in real
time. Here we derive semi-analytical formulas for caps and swaptions, making use
of Fourier transform methods. Closed-form solutions for the CIR driving process
will be derived in the next section. We also briefly discuss hedging issues.

\subsection{Caps and floors}

Caps are series of call options on the successive \lib rates, termed caplets,
while floors are series of put options on LIBOR rates, termed floorlets. Caplets
and floorlets are usually settled \emph{in arrears}, i.e. the caplet with
maturity $T_k$ is settled at time $T_{k+1}=T_k+\delta$. We consider a tenor
structure with constant tenor length $\delta$ for simplicity, although this
assumption can be easily relaxed. A cap has the payoff
\begin{align}
\sum_{k=1}^{N-1} \delta (L(T_k,T_k)-K)^+.
\end{align}
Keeping the basic relationship \eqref{basic} in mind, we can re-write caplets as
call options on forward prices:
\begin{align}\label{last-capl}
\delta(L(T_{k},T_{k}) - K)^+
 &= (1+\delta L(T_{k},T_{k}) - 1-\delta K)^+ \nonumber\\
 &= \Big(\frac{M_{\tk[k]}^{u_{k}}}{M_{\tk[k]}^{u_{k+1}}} - \mathscr{K}\Big)^+,
\end{align}
where $\mathscr{K}:=1+\delta K$.

Each individual caplet is typically priced under its corresponding forward
measure to avoid the evaluation of a joint law or characteristic function. In
our modeling framework we have that
\begin{align}\label{caplet-1}
\mathbb{C}(T_k,K)
 &= B(0,\tk[k+1])\, \E_{\P_{\tk[k+1]}}
    \big[\delta\big(L(T_k,T_k) - K\big)^+\big] \nonumber\\
 &= B(0,\tk[k+1])\, \E_{\P_{\tk[k+1]}}
    \Big[\Big(\frac{M_{\tk[k]}^{u_{k}}}{M_{\tk[k]}^{u_{k+1}}} -
\mathscr{K}\Big)^+\Big].
\end{align}
Then, we can apply Fourier methods to calculate the price of this caplet as an
ordinary call option on the forward price.

\begin{proposition}
The price of a caplet with strike $K$ maturing at time $T_k$ is given by the
formula
\begin{align}\label{caplet-0}
\mathbb{C}(T_k,K)
 &= \frac{B(0,\tk[N])\mathscr{K}}{2\pi} \int_{\R} \mathscr{K}^{-R+\ii v}
    \frac{\Lambda_{A_k+B_k\cdot X_{\tk[k]}}(R-\ii v)}{(R-\ii v)(R-1-\ii v)} \dv,
\end{align}
where $R\in\mathcal{J}^k\cap(1,\infty)$ and the moment generating function
$\Lambda_{A_k+B_k\cdot X_{\tk[k]}}$ is given by \eqref{MGF-FOR-tk} via
\begin{align}
\Lambda_{A_k+B_k\cdot X_{\tk[k]}}(v)
 &= \frac{B(0,\tk[k+1])}{B(0,\tk)}
    \E_{\pt[k+1]}\big[\e^{v(A_k+B_k\cdot X_{\tk[k]})}\big].
\end{align}
\end{proposition}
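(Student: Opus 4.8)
The plan is to read the caplet as an ordinary European call written on the strictly positive forward price $F(T_k,T_k,\tk[k+1]) = M_{\tk[k]}^{u_k}/M_{\tk[k]}^{u_{k+1}} = \e^{A_k + B_k\cdot X_{\tk[k]}}$, and to price it by the classical damped--Fourier method. Writing $Y := A_k + B_k\cdot X_{\tk[k]}$, equation \eqref{caplet-1} becomes
\begin{align*}
\mathbb{C}(T_k,K) = B(0,\tk[k+1])\,\E_{\pt[k+1]}\big[(\e^{Y} - \mathscr{K})^+\big];
\end{align*}
everything then reduces to evaluating $\E_{\pt[k+1]}[(\e^{Y}-\mathscr{K})^+]$ and bookkeeping the deterministic bond-price factors.

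First I would damp the payoff. For $R>1$ the function $y\mapsto \e^{-Ry}(\e^{y}-\mathscr{K})^+$ lies in $L^1(\R)$ --- it vanishes for $y\le\ln\mathscr{K}$ and is $O(\e^{(1-R)y})$ as $y\to+\infty$ --- and an elementary integration gives its Fourier transform
\begin{align*}
\widehat{g_R}(v) := \int_{\R}\e^{ivy}\,\e^{-Ry}(\e^{y}-\mathscr{K})^+\,\dy
 = \frac{\mathscr{K}\cdot\mathscr{K}^{-R+iv}}{(R-iv)(R-1-iv)}.
\end{align*}
Since $\widehat{g_R}$ is $O(v^{-2})$, it is itself in $L^1(\R)$, so Fourier inversion holds pointwise and $(\e^{y}-\mathscr{K})^+ = \frac{1}{2\pi}\int_{\R}\e^{(R-iv)y}\,\widehat{g_R}(v)\,\dv$. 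Substituting $y=Y$, taking $\E_{\pt[k+1]}$, and interchanging expectation and integral by Fubini yields
\begin{align*}
\E_{\pt[k+1]}\big[(\e^{Y}-\mathscr{K})^+\big]
 = \frac{1}{2\pi}\int_{\R}\E_{\pt[k+1]}\big[\e^{(R-iv)Y}\big]\,\frac{\mathscr{K}\cdot\mathscr{K}^{-R+iv}}{(R-iv)(R-1-iv)}\,\dv.
\end{align*}
It then remains to identify $\E_{\pt[k+1]}[\e^{(R-iv)Y}]$: the explicit formula \eqref{MGF-FOR-tk} computes $\E_{\pt[k+1]}[\e^{zY}]$ for every complex $z$ with $\mathrm{Re}\,z\in\mathcal{J}^k$ (the left-hand side is finite --- since $|\e^{zY}|=\e^{(\mathrm{Re}\,z)Y}$ --- and analytic there, hence equals the analytic continuation in $z$ of the right-hand side of \eqref{MGF-FOR-tk}). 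Taking $z=R-iv$ and collecting the deterministic bond ratios appearing in \eqref{MGF-FOR-tk} and in the definition of $\Lambda_{A_k+B_k\cdot X_{\tk[k]}}$ against the outer factor $B(0,\tk[k+1])$ produces \eqref{caplet-0}.

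The only step that genuinely uses the hypothesis on $R$ is the Fubini interchange: $\big|\E_{\pt[k+1]}[\e^{(R-iv)Y}]\big|\le\E_{\pt[k+1]}[\e^{RY}]$, which is finite precisely because $R\in\mathcal{J}^k$ (this finiteness is exactly what $\mathcal{J}^k$ encodes), while the kernel $|(R-iv)(R-1-iv)|^{-1}$ is $O(v^{-2})$ and hence integrable; absolute integrability of the integrand follows, justifying both Fubini and the convergence of the integral in \eqref{caplet-0}. The two constraints on $R$ --- $R>1$ (already needed for $\widehat{g_R}$ to exist) and $R\in\mathcal{J}^k$ (needed for the damped moment generating function to be finite) --- are exactly the stated assumption $R\in\mathcal{J}^k\cap(1,\infty)$. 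There is no deep obstacle: the only care-demanding points are the complex-analytic justification that \eqref{MGF-FOR-tk} still evaluates $\E[\e^{zY}]$ for complex arguments, and the routine bond-price bookkeeping that collapses the various $B(0,\cdot)$ ratios into the single prefactor in \eqref{caplet-0}. Alternatively, one can avoid computing $\widehat{g_R}$ by hand and instead invoke a general Fourier-pricing theorem (e.g.\ Eberlein, Glau and Papapantoleon), verifying only its two hypotheses --- integrability of the damped payoff and finiteness of $\Lambda_{A_k+B_k\cdot X_{\tk[k]}}(R)$ --- which is perhaps the cleanest write-up.
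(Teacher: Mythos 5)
Your proposal is correct and follows essentially the same route as the paper: both read the caplet as a call on $A_k+B_k\cdot X_{T_k}$, price it by the damped-Fourier method with $R\in\mathcal{J}^k\cap(1,\infty)$, and substitute the explicit moment generating function \eqref{MGF-FOR-tk}. The only difference is presentational --- you derive the damped payoff transform and the Fubini step by hand, whereas the paper cites Theorem 2.2 of Eberlein, Glau and Papapantoleon after checking its hypotheses, an alternative you yourself note.
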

\begin{proof}
Starting from \eqref{caplet-1} and recalling the notation \eqref{simple}, we get
that
\begin{align}\label{caplet-2}
\mathbb{C}(T_k,K)
 &= B(0,\tk[k+1])\, \E_{\P_{\tk[k+1]}}
    \Big[\Big(\e^{A_k+B_k\cdot X_{T_k}} - \mathscr{K}\Big)^+\Big],
\end{align}
hence we can view this as a call option on the random variable
$A_k+B_k\cdot X_{T_k}$. Now, since the moment generating function
$\Lambda_{A_k+B_k\cdot X_{\tk[k]}}$ of $A_k+B_k\cdot X_{T_k}$ is finite for
$R\in\mathcal{J}^k$, and the dampened payoff function of the call option is
continuous, bounded and integrable, and has an integrable Fourier transform for
$R\in(1,\infty)$, we can apply Theorem 2.2 in
\shortciteN{EberleinGlauPapapantoleon08} and immediately get that
\begin{align*}
\mathbb{C}(T_k,K)
 &= \frac{B(0,\tk[k+1])}{2\pi} \int_\R \mathscr{K}^{1+\ii v-R}
\frac{\E_{\pt[k+1]}\big[\e^{(R-\ii v)(A_k+B_k\cdot X_{\tk[k]})}\big]}
     {(\ii v-R)(1+\ii v-R)}\dv,
\end{align*}
which using \eqref{MGF-FOR-tk} yields the required formula.
\end{proof}

\subsection{Swaptions}

We now turn our attention to swaptions, but restrict ourselves to
one-dimensional affine processes as driving motions. The method for pricing
swaptions resembles \citeN{Jamshidian89} and has been also applied to
\lev-driven HJM models; cf. \citeN{EberleinKluge04}.

Recall that a payer (resp. receiver) swaption can be viewed as a put (resp.
call) option on a coupon bond with exercise price 1; cf. section 16.2.3 and
16.3.2 in \citeN{MusielaRutkowski97}. Consider a payer swaption with strike rate
$K$, where the underlying swap starts at time $T_k$ and matures at $T_m$
($k<m\le N$). The time-$T_k$ value is
\begin{align}
\mathbb{S}_{T_k}(K,T_k,T_m)
 = \left( 1-\sum^m_{i=k+1} c_i B(T_k,T_i)\right)^+,
\end{align}
where
\begin{align}
c_i = \left\{
        \begin{array}{ll}
          \delta K, & \hbox{$k+1\le i\le m-1$,} \\
          1+\delta K, & \hbox{$i=m$.}
        \end{array}
      \right.
\end{align}
Now, we can express bond prices in terms of the martingales $M^u$, as follows:
\begin{align}
B(T_k,T_i)
 &= \prod_{l=i}^{k-1} \frac{B(T_k,\tk[l+1])}{B(T_k,T_l)}
  = \prod_{l=i}^{k-1} \frac{M^{u_{l+1}}_{T_k}}{M^{u_{l}}_{T_k}}
  = \frac{M^{u_i}_{T_k}}{M^{u_k}_{T_k}},
\end{align}
since the product is again telescoping. Analogously to forward prices, cf.
\eqref{mg-3}, the dynamics of such quotients is again exponentially
affine:
\begin{align}
\frac{M_{T_k}^{u_i}}{M_{T_k}^{u_k}}
 &= \exp\Big(\phi_{T_N-{T_k}}(u_i)- \phi_{T_N-{T_k}}(u_k)\nonumber\\
 &\qquad\qquad + 
  \big\langle\psi_{T_N-{T_k}}(u_i)-\psi_{T_N-{T_k}}(u_k),X_{T_k}\big\rangle\Big)
  \nonumber\\
 &=: \exp\big(A_{i,k}+B_{i,k}\cdot X_{T_k}\big).
\end{align}

Then, the time-0 value of the swaption is obtained by taking the discounted
$\pt[k]$-expectation of its time-$T_k$ value, hence
\begin{align}\label{swap-1}
\mathbb{S}_{0}(K,T_k,T_m)
 &= B(0,T_k)\,
    \E_{\pt[k]}\left[\left( 1-\sum^m_{i=k+1} c_i B(T_k,T_i)\right)^+\right]
\nonumber\\
 &= B(0,T_k)\,
    \E_{\pt[k]}\left[\left( 1-\sum^m_{i=k+1} c_i
\frac{M^{u_i}_{T_k}}{M^{u_k}_{T_k}}\right)^+\right] \nonumber\\
 &= B(0,T_k)\,
    \E_{\pt[k]}\left[\left( 1-\sum^m_{i=k+1} c_i \e^{A_{i,k}+B_{i,k}\cdot
X_{T_k}}\right)^+\right],
\end{align}
and this expectation can be computed with Fourier transform methods.

Define the functions $\underline{f}(x)=1-\sum^m_{i=k+1} c_i \exp\left(A_{i,k} +
B_{i,k}\cdot x\right)$ and
\begin{align}
f(x)
 = \max\{\underline{f}(x),0\}
 = \left( 1-\sum^m_{i=k+1} c_i \e^{A_{i,k}}\e^{B_{i,k}\cdot x}\right)^+.
\end{align}
We will also assume that, at least some, initial LIBOR rates are
\textit{positive}.

\begin{proposition}
The price of a swaption with strike rate $K$, option maturity $T_k$ and swap
maturity $T_m$ is given by
\begin{align}\label{swap-value}
\mathbb{S}_{0}(K,T_k,T_m)
 &= \frac{B(0,T_k)}{2\pi} \int_\R \Lambda_{X_{T_k}}(R-\ii v)\widehat{f}(v+ \ii
R)\ud v,
\end{align}
where the Fourier transform of the payoff function $f$ is
\begin{align}
\widehat{f}(v+\ii R)
 &= \e^{(iv-R)\mathcal{Y}}
    \left( \sum^m_{i=k+1}
\frac{c_i\e^{A_{i,k}+B_{i,k}\mathcal{Y}}}{B_{i,k}-R+ \ii v} - \frac{1}{\ii
v-R}\right).
\end{align}
Here $\mathcal{Y}$ denotes the unique zero of the function $\underline{f}$, the
$\pt[k]$-moment generating function $\Lambda_{X_{T_k}}$ of $X_{T_k}$ is given by
\eqref{Pk-mgf} and $R\in\mathcal{I}^k\cap(0,\infty)$.
\end{proposition}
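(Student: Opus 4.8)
The plan is to read the time-$0$ swaption value \eqref{swap-1} as the price of a European claim with payoff $f(X_{T_i})$ under the forward measure $\pt[i]$, and then to apply the Fourier-based valuation formula of \shortciteN{EberleinGlauPapapantoleon08} (Theorem 2.2), in complete analogy with the caplet computation. Two preliminary ingredients are needed: the exercise region of the swaption — equivalently, the zero $\mathcal{Y}$ of $\underline f$ — and the dampened Fourier transform $\widehat f$.

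First I would study $\underline f(x)=1-\sum_{k=i+1}^{m} c_k \e^{A_{k,i}}\e^{B_{k,i}x}$. In the one-dimensional setting the parameters satisfy $u_m\le\dots\le u_{i+1}\le u_i$, and since at least one of the initial LIBOR rates $L(0,T_i),\dots,L(0,T_{m-1})$ is positive, Lemma~\ref{positivity}(3)--(4) gives $B_{k,i}=\psi_{T_N-T_i}(u_k)-\psi_{T_N-T_i}(u_i)\le 0$ for all $k$, with at least one strict inequality. Hence every summand of $\underline f$ is non-increasing, at least one strictly so, and $\underline f$ is continuous and strictly increasing with $\underline f(x)\to-\infty$ as $x\to-\infty$ and $\underline f(x)\to 1$ as $x\to+\infty$. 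Therefore $\underline f$ has a \emph{unique} zero $\mathcal{Y}$ (this is precisely the Jamshidian-type reduction of the swaption to a single threshold), and $f(x)=\underline f(x)\mathbf{1}_{\{x\ge\mathcal{Y}\}}$; in particular $f$ is continuous, bounded by $1$, and supported on $[\mathcal{Y},\infty)$.

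Next I would compute $\widehat f(v+iR)=\int_{\R}\e^{(iv-R)x}f(x)\,\ud x=\int_{\mathcal{Y}}^{\infty}\e^{(iv-R)x}\bigl(1-\sum_k c_k\e^{A_{k,i}}\e^{B_{k,i}x}\bigr)\,\ud x$ for $R>0$. Because $R>0$ and $B_{k,i}-R<0$, all the elementary integrals $\int_{\mathcal{Y}}^{\infty}\e^{(iv-R)x}\,\ud x$ and $\int_{\mathcal{Y}}^{\infty}\e^{(iv-R+B_{k,i})x}\,\ud x$ converge, and evaluating them yields exactly the closed form claimed in the statement. It is worth recording the identity $\underline f(\mathcal{Y})=0$, i.e. $\sum_k c_k\e^{A_{k,i}+B_{k,i}\mathcal{Y}}=1$: it makes the $\mathcal O(1/v)$ contributions to $\widehat f(v+iR)$ cancel as $|v|\to\infty$, so that $\widehat f(\cdot+iR)=\mathcal O(1/v^2)$ and hence $\widehat f(\cdot+iR)\in L^1(\R)$.

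Finally I would verify the hypotheses of Theorem 2.2 in \shortciteN{EberleinGlauPapapantoleon08}: the dampened payoff $g:=\e^{-R\,\cdot}f$ is continuous, bounded and integrable; its Fourier transform $\widehat g=\widehat f(\cdot+iR)$ is integrable by the previous step; and, since $R\in\mathcal{I}^i\cap(0,\infty)$, the moment generating function $\Lambda_{X_{T_i}}(R-i\,\cdot)$ is finite and bounded along the imaginary direction, so that the product $\Lambda_{X_{T_i}}(R-iv)\widehat f(v+iR)$ is integrable in $v$. The theorem then gives $\E_{\pt[i]}\bigl[f(X_{T_i})\bigr]=\frac1{2\pi}\int_{\R}\Lambda_{X_{T_i}}(R-iv)\widehat f(v+iR)\,\ud v$; multiplying by $B(0,T_i)$ and substituting the $\pt[i]$-moment generating function of $X_{T_i}$ from \eqref{Pk-mgf} (with $s=0$, $r=T_i$, $k=i$) gives \eqref{swap-value}. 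The main obstacle is the bookkeeping around $\mathcal{Y}$: one must use the positivity assumption together with the strict order-preservation in Lemma~\ref{positivity}(4) to guarantee that $\mathcal{Y}$ exists and is unique, and then exploit $\underline f(\mathcal{Y})=0$ to obtain the $\mathcal O(1/v^2)$ decay of $\widehat f$ that legitimises the inversion; the remaining steps are routine calculations.
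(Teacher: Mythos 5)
Your proposal is correct and follows essentially the same route as the paper: the Jamshidian-type reduction to the unique zero $\mathcal{Y}$ of $\underline{f}$ via $B_{k,i}\le 0$ with strict inequality for at least one $k$, the explicit computation of $\widehat{f}$ on the half-line $[\mathcal{Y},\infty)$, and an application of Theorem 2.2 of \shortciteN{EberleinGlauPapapantoleon08} under the forward measure $\pt[i]$. The only (minor and equally valid) difference is the justification of the integrability of the dampened transform: you exploit the cancellation $\underline{f}(\mathcal{Y})=0$ to obtain $\mathcal{O}(1/v^2)$ decay of $\widehat{f}(\cdot+iR)$, whereas the paper observes that $g=\e^{-R\cdot}f$ and its weak derivative are square integrable, so $g\in H^1(\R)$, and invokes Lemma 2.5 of the same reference.
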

\begin{proof}
Starting from \eqref{swap-1} and using Theorem 2.2 in
\shortciteN{EberleinGlauPapapantoleon08} again, we have that
\begin{align}\label{swap-value-2}
\mathbb{S}_{0}(K,T_k,T_m)
 &= B(0,T_k)\,
    \E_{\pt[k]}\left[\left( 1-\sum^m_{i=k+1} c_i \e^{A_{i,k}}\e^{B_{i,k}\cdot
X_{T_k}}\right)^+\right] \nonumber\\
 &= B(0,T_k)\, \int_\R
    \left( 1-\sum^m_{i=k+1} c_i \e^{A_{i,k}}\e^{B_{i,k}\cdot x}\right)^+
\P_{T_k,X_{T_k}}(\dx) \nonumber\\
 &= \frac{B(0,T_k)}{2\pi} \int_\R \Lambda_{X_{T_k}}(R-\ii v)\widehat{f}(v+\ii
R)\ud v,
\end{align}
where $\Lambda_{X_{T_k}}$ denotes the $\pt[k]$-moment generating function of the
random variable $X_{T_k}$, and $\widehat{f}$ denotes the Fourier transform of
the function $f$.

Now, we just have to calculate the Fourier transform of $f$ and show that the
prerequisites of the aforementioned theorem are satisfied. We know that
$\Lambda_{X_{T_k}}$ is finite for $R\in\mathcal{I}^k$. Since we have assumed
that some LIBOR rates are positive, Proposition \ref{initial-fit} and Lemma
\ref{positivity} yield that $B_{i,k} \le 0$ with strict inequality at least for
one $i \in \set{k+1,\dots,m}$. Hence, we can easily deduce that
$\underline{f}'(x)>0$, therefore $\underline{f}$ is a strictly increasing
function. Moreover, it is continuous and takes positive and negative values,
hence it has a unique zero, which we denote by $\mathcal{Y}$. Therefore,
\begin{align}\label{unique-zero-Y}
 f(x)=\underline{f}(x)1_{(\mathcal{Y},\infty)}.
\end{align}
Now, for $z\in\C$ with $\Im z>0$, the Fourier transform of $f$ is
\begin{align}
\widehat{f}(z)
 &= \int_\R \e^{\ii zx} \left( 1-\sum^m_{i=k+1} c_i \e^{A_{i,k}}\e^{B_{i,k}\cdot
x}\right)^+ \dx \nonumber\\
 &= \int^{\infty}_{\mathcal{Y}}\e^{\ii zx}
      \left( 1-\sum^m_{i=k+1} c_i \e^{A_{i,k}}\e^{B_{i,k}\cdot x}\right) \dx
\nonumber\\
 &= \int^{\infty}_{\mathcal{Y}}\e^{\ii zx}\dx
    - \sum^m_{i=k+1} c_i
\e^{A_{i,k}}\int^{\infty}_{\mathcal{Y}}\e^{(\ii z+B_{i,k})x}\dx\nonumber\\
 &= \e^{\ii z\mathcal{Y}}
    \left( \sum^m_{i=k+1} \frac{c_i\e^{A_{i,k}+B_{i,k}\mathcal{Y}}}{B_{i,k}+ \ii
z}-
\frac{1}{\ii z} \right).
\end{align}
Moreover, by examining the weak derivative of the dampened payoff function
$g(x)=\e^{-Rx}f(x)$ for $R>0$, we see that it is square integrable, as is $g$
itself. Hence $g$ lies in the Sobolev space $H^1(\R)$ and applying Lemma 2.5 in
Eberlein et al. \citeyear{EberleinGlauPapapantoleon08} yields that the Fourier
transform of $g$ is integrable.
\end{proof}

\begin{remark}
One method for pricing swaptions in multi-factor affine LIBOR models is to
follow the same procedure as above, where now $\mathcal{Y}$ will denote the set
of zeros of $f$ on $\R^d$. This computation can be challenging in general; but
see section \ref{2-CIR} for a case where it simplifies considerably. In the
following section we present another method for pricing swaptions, which is
interesting in its own right.
\end{remark}

\begin{remark}
In constrast to \citeN{Schrager_Pelsser_2006} we do not approximate swap rates
to overcome some analytical difficulties. This efficient technique could
certainly be applied, but this is not in the spirit of the present work. We
rather use the consequences of Axiom 3, namely the analytic tractability of any
vector of \lib rates with respect to any forward measure, to obtain pricing
formulas via Fourier pricing.
\end{remark}

\subsection{Swaptions in multi-factor models}

Next, we present an alternative method for pricing swaptions in affine LIBOR
models, which is particularly suitable for multi-factor models. The main idea is
to think of a swaption as a basket put option on artificial assets. Let $X$ be
an $\Rp^d$-valued affine process ($d>1$) and consider a swaption with
option maturity $T_k$ and swap maturity $T_k+T_i=T_m$. Accodring to
\eqref{swap-1} the price of this swaption of provided by
\begin{align}\label{swap-mf-ALM-1}
\mathbb{S}_{0}(K,T_k,T_m)
 &= B(0,T_k)\,\E_{\pt[k]} \left[\left( 1-\sum^m_{l=k+1} c_l \e^{A_{l,k} 
     + \scal{B_{l,k}}{X_{T_k}}} \right)^+\right] \nonumber\\
 &= B(0,T_k)\,\E_{\pt[k]} \left[\left( 1-\sum^m_{l=k+1} \e^{Y_l}
    \right)^+\right],
\end{align}
with the obvious definitions
\begin{align}
Y_l:= \gamma_l + \scal{B_{l,k}}{X_{T_k}}
\quad\text{ and }\quad
\gamma_l:= \log c_l + A_{l,k}.
\end{align}
The payoff function of the swaption resembles the payoff function of a basket
put option on $i$ assets, and the Fourier transform of the function
\begin{align}
g(x_1,\dots,x_i) = \left( 1 -\sum^i_{l=1} \e^{x_l} \right)^+
\end{align}
has been derived in \citeN{HubalekKallsen03}; see also \citeANP{HurdZhou09}
\citeyear{HurdZhou09}. We have that, for $z\in\C^i$ with $\Im z\le0$,
\begin{align}\label{swap-mf-ALM-2}
\widehat{g}(z) 
 = \frac{\prod_{l=1}^{i} \Gamma(\ii z_l)}
        {\Gamma\big( \ii\sum_{l=1}^{i} z_l+2\big)},
\end{align}
where $\Gamma$ denotes the Gamma function.

Therefore, we can derive a semi-analytical valuation formula for swaptions
applying Fourier methods again (cf.
\shortciteN[Theorem~3.2]{EberleinGlauPapapantoleon08}). Using
\eqref{swap-mf-ALM-1} and \eqref{swap-mf-ALM-2} we deduce that the price of a
swaption is provided by
\begin{align}\label{swaption-price}
\mathbb{S}_{0}(K,T_k,T_m)
 &= \frac{B(0,T_k)}{(2\pi)^i} \int_{\R^i}
      \widehat{g}(\ii R-v)M_Y(R+\ii v)\dv,
\end{align}
where the moment generating function $M_Y$ of $Y=(Y_{k+1},\dots,Y_m)$ can be
computed explicitly using the affine property of the driving factor process.
Indeed, for suitable $v\in\C^i$, we have that
\begin{align}
\E_{\pt[k]}\big[\exp\scal{v}{Y_t}\big]
 &= \e^{\scal{v}{\gamma_l}}\, \E_{\pt[k]}\left[\exp\left(\sum_{l=k+1}^m v_l
    \left(\sum_{j=1}^d B_{l,k}^j X_{T_k}^j\right)\right)\right] \nonumber\\
 &= \e^{\scal{v}{\gamma_l}}\, \E_{\pt[k]}\left[\exp\left(\sum_{j=1}^d 
    \left(\sum_{l=k+1}^m v_lB_{l,k}^j \right)X_{T_k}^j\right)\right] \nonumber\\
 &= \e^{\scal{v}{\gamma_l}}\, 
    \E_{\pt[k]}\big[\exp\left\langle U_k,X_{T_k}\right\rangle\big]\nonumber\\
 &= \exp\Big(\scal{v}{\gamma_l} + \phi^k_{T_k}(U_k) 
             + \scal{\psi^k_{T_k}(U_k)}{x}\Big),
\end{align}
where $U_k^j:=\sum_{l=k+1}^m u_lB_{l,k}^j$, while $\phi^k$ and $\psi^k$ are
provided by \eqref{Pk-mgf-0}.

\begin{remark}
One can immediately notice in \eqref{swaption-price} that the dimension of the
integration depends on the length of the underlying swap and not on the
dimension of the factor process, as is the case in the one-factor model, cf.
\eqref{swap-value}. The \textit{disadvantage} here is that the formula becomes
infeasible even when the swap is moderately long (e.g. an $x$-on-3 years
swaption in a semiannual tenor structure requires a 6-dimensional integration).
The \textit{advantage} is that, for contracts with short swap maturity, the
swaption formula yields explicit results irrespective of how many factors the
model has. In any case, one can use the exact formula \eqref{swaption-price} as
a \textit{benchmark} to test faster approximative solutions (see, for example,
\citeNP{CollinDufresne_Goldstein_2002}).
\end{remark}

\subsection{Hedging and Greeks}

Hedging interest rate derivatives in this class of models will be dealt with in
future research, however we would like to point out some important aspects. On
the one hand, option price sensitivities, i.e. Greeks, can be calculated in the
affine LIBOR model in semi-analytical form. Indeed, under certain conditions we
are allowed to exchange integration and differentiation of the option price
formulas, which leads to Fourier-based methods for Greeks (cf. e.g.
\shortciteNP[section 4]{EberleinGlauPapapantoleon08}).

On the other hand, certain examples of affine \lib models are \textit{complete}
in their own filtration, whence the hedging strategy is provided by
$\Delta$-hedging. The CIR model -- presented in section \ref{CIR} -- is such an
example, where completeness follows from the continuity of the paths and the
Markov property.

\section{Examples}
\label{toy-1}

We present here three concrete specifications of the affine LIBOR models we have
constructed. In the first two specifications the driving processes are the
Cox-Ingersoll-Ross process and an OU-type process driven by a compound Poisson
subordinator with exponentially distributed jumps, such that it has the Gamma
law as stationary distribution. The third specification is a 2-factor extension
of the CIR-driven model. We first describe the driving affine processes and then
discuss the affine martingales used to model \lib rates. In the case of the
1-factor CIR driving process we derive closed-form pricing formulas for caps and
swaptions, utiling the $\chi^2$-distribution function. Moreover, in the 2-factor
extension of the CIR model a closed form pricing formula for caps is obtained,
while a semi-analytical formula for swaptions is also derived.

\subsection{CIR martingales}
\label{CIR}

The first example is the Cox-Ingersoll-Ross (CIR) process, given by
\begin{align}\label{Bessel-sde}
 \dd X_t = -\lambda \left(X_t - \theta \right)\dt + 2\eta \sqrt{X_t}\ud W_t,
 \quad X_0=x\in\Rp,
\end{align}
where $\lambda, \theta, \eta \in \Rp$. This process is an affine process on
$\Rp$, with
\begin{align}\label{Bessel-FR}
 F(u) = \lambda \theta  u
\quad\text{ and }\quad
 R(u) = 2 \eta^2 u^2 - \lambda u.
\end{align}
Its moment generating function is given by
\begin{align}\label{Bessel-mgf}
\E_x\big[\e^{uX_t}\big]
 &= \exp\Big(\phi_t(u) + x \cdot \psi_t(u)\Big),
\end{align}
where
\begin{align}\label{Bessel-phipsi}
 \phi_t(u) = -\frac{\lambda \theta}{2 \eta^2}\log\big(1-2\eta^2 b(t) u\big)
\quad\text{ and }\quad
 \psi_t(u) = \frac{a(t)u}{1 - 2 \eta^2 b(t) u},
\end{align}
with
\begin{align*}
b(t) =
\begin{cases}
 t, \; & \text{if}\;\lambda = 0\\
 \frac{1 - \e^{-\lambda t}}{\lambda},\; & \text{if}\;\lambda\neq0
\end{cases},
\qquad \text{and} \qquad
a(t) = \e^{-\lambda t}.
\end{align*}
The martingales defined in \eqref{mg-1} thus take the form
\begin{align}\label{Bessel-mg}
M_t^u
 &= \exp\big(\phi_{T_N-t}(u) + \langle\psi_{T_N-t}(u),X_t\rangle\big)
\\\nonumber
 &= \exp\left( -\frac{\lambda\theta}{2\eta^2}\log\big(1-2\eta^2 b(T_N-t)u\big)
               + \frac{\e^{-\lambda(T_N-t)}u}{1-2\eta^2 b(T_N-t)u}\cdot X_t
\right),
\end{align}
where $u$ must be chosen such that $u < \frac{1}{2\eta^2 b(T_N)}$. Note that
$\gamma_X = \infty$ (see Definition~\ref{def-gamma}), such that by
Proposition~\ref{initial-fit} the model can fit any term structure of initial
LIBOR rates.\\

In order to describe the marginal distribution of this process, we derive some
useful results on an extension of the non-central chi-square distribution. We
say that a random variable $Y$ has \textit{location-scale extended non-central
chi-square} distribution with parameters $(\mu,\sigma,\nu,\alpha)$, or short
$Y\sim\lsnc(\mu,\sigma,\nu,\alpha)$, if $\frac{Y - \mu}{\sigma}$ has non-central
chi-square distribution with $\nu$ degrees of freedom and non-centrality
parameter $\alpha$. The density and distribution function of $Y$ can be derived
in the obvious way from the density and distribution function of the non-central
chi-square law. We will also need the cumulant generating function of $Y$, which
is given by
\begin{align}
\kappa_{\lsnc} (u) &= -\frac{\nu}{2} \log\left(1 - 2\sigma
u\right) + \frac{\alpha \sigma u}{1 - 2\sigma u} + \mu u, \qquad
(u < \frac{1}{2\sigma})\;.\label{cgf-lsnc}
\end{align}
For any $\vartheta < \frac{1}{2\sigma}$ we may consider the random variable
$Y_\vartheta$ with distribution function $F_\vartheta$, defined through the
exponential change of measure
$\frac{\ud F_\vartheta}{\ud F} = \e^{x \vartheta - \kappa(\vartheta)}$. It is
well known that the cumulant generating function of $Y_\vartheta$ is given by
$\kappa_\vartheta(u) = \kappa(u + \vartheta) - \kappa(\vartheta)$. For the
$\lsnc$-distribution a simple calculation using \eqref{cgf-lsnc} shows that
\begin{align}\label{lsnc-measurechange}
Y_{\vartheta} \sim
\lsnc\left(\mu,\frac{\sigma}{\zeta},\nu,\frac{\alpha}{\zeta}\right),
\qquad \text{with} \qquad \zeta = 1 - 2\sigma\vartheta > 0.
\end{align}

Let us now return to the CIR process $X$. Comparing \eqref{Bessel-phipsi} and
\eqref{cgf-lsnc}, shows that
\begin{align}
X_t \; \stackrel{\P_{T_N}}{\sim} \; \lsnc\left(0,\eta^2 b(t), \frac{\lambda
\theta}{\eta^2},
\frac{x a(t)}{\eta^2 b(t)}\right)\,,
\end{align}
i.e. the marginals of $X$ have $\lsnc$-distribution under the terminal measure.
By \eqref{Pk-mgf-2} and \eqref{Pk-mgf-0} we know that the measure change from
the
terminal measure $\P_{T_N}$ to the forward measure $\P_{T_{k+1}}$ is an
exponential change of measure, with $\vartheta = \psi_{T_N-t}(u_{k+1})$. Thus,
we derive from \eqref{lsnc-measurechange} that
\begin{align}\label{LSNC-Pk}
X_t \; &\stackrel{\P_{T_k}}{\sim} \;
\lsnc\left(0,\frac{\eta^2 b(t)}{\zeta(t,\tk[k])},
           \frac{\lambda \theta}{\eta^2},
           \frac{x a(t)}{\eta^2 b(t) \zeta(t,\tk[k])}\right),
\end{align}
where
\begin{align}
\zeta(t,\tk[k]) = 1 - 2 \eta^2 b(t) \psi_{T_N-t}(u_k).
\end{align}
Finally, it follows from \eqref{simple}, that the log-forward rates have
distribution
\begin{align}
\log\left(\frac{M_{T_k}^{u_k}}{M_{T_k}^{\uk}}\right) \;
&\stackrel{\P_{T_{k+1}}}{\sim} \;
\lsnc\left(A_k,\frac{B_k \eta^2 b(T_k)}{\zeta(T_k,\tk[k+1])},
 \frac{\lambda\theta}{\eta^2},
 \frac{x a(T_k)}{\eta^2 b(T_k) \zeta(T_k,\tk[k+1])}\right)
\end{align}
under the corresponding forward measure, where $A_k$ and $B_k$ are given by
\eqref{simple}. Hence, log-forward rates are LSNC$-\chi^2$-distributed under
\textit{any} forward measure with different parameters $\sigma$ and $\alpha$,
due to the different $\zeta$.

We are now in the position to derive a closed-form caplet valuation formula for
the CIR model. Denoting by
$Z =\log\left(\frac{B(T_k,T_k)}{B(T_k,T_{k+1})}\right)$ the log-forward rate, it
holds that
\begin{align}
\mathbb{C}(T_k,K)
 &= B(0,\tk[k+1])\,
    \E_{\P_{\tk[k+1]}}\left[\left(\e^Z - \mathscr{K}\right)^+\right]\\
 &= B(0,\tk[k+1])\,
    \left\{\E_{\P_{\tk[k+1]}}\left[\e^Z 1_{\set{Z\ge\log\mathscr{K}}}\right]
           - \mathscr{K}\,\P_{\tk[k+1]}\left[Z\ge\log\mathscr{K}\right] \right\}
\nonumber \\
 &= B(0,\tk[k])\,\P_{\tk[k]}\left[Z\ge\log\mathscr{K}\right]
           -
\mathscr{K}\,B(0,\tk[k+1])\,\P_{\tk[k+1]}\left[Z\ge\log\mathscr{K}\right]\;,
\nonumber
\end{align}
where we have used \eqref{Pk-to-next} and $\mathscr{K}=1+\delta K$. The
probability terms can be evaluated through the distribution function of the
$\lsnc$-distribution. After some calculations, we arrive at the following
result:
\begin{equation}\label{CIR-caplet-closed}
\mathbb{C}(T_k,K)
 = B(0,T_k)\cdot
\overline{\chi}^2_{\nu,\alpha_1}\left(\frac{\log\mathscr{K}-A_k}{B_k \sigma_1}
\right)
 - \mathscr{K}^\star \cdot
\overline{\chi}^2_{\nu,\alpha_2}\left(\frac{\log\mathscr{K}-A_k}{B_k \sigma_2}
\right)\,,
\end{equation}
where  $\mathscr{K}^\star:=\mathscr{K}\,B(0,\tk[k+1])$, 
$\overline{\chi}^2_{\nu,\alpha}(x) := 1 -\chi^2_{\nu,\alpha}(x)$,
with $\chi^2_{\nu,\alpha}(x)$ the non-central chi-square distribution
function, and
\begin{align*}
\nu = \frac{\lambda\theta}{\eta^2}, \quad
\sigma_i = \frac{\eta^2 b(T_k)}{\zeta(T_k,T_i)} \quad \text{and} \quad
\alpha_i = \frac{x a(T_k)}{\eta^2 b(T_k) \zeta(T_k,T_i)},
\end{align*}
for each $i\in\{k, \dotsc, N\}$.

Similarly a closed-form pricing formula for swaptions can be derived; by
\eqref{swap-1}, using \eqref{Pk-to-final}, we get
\begin{align}
&\mathbb{S}_{0}(K,T_k,T_m) \nonumber\\
 &= B(0,T_k)\,
    \E_{\pt[k]}\left[\left( 1-\sum^m_{i=k+1} c_i
\frac{M^{u_i}_{T_k}}{M^{u_k}_{T_k}}\right)^+\right] \nonumber\\
 &= B(0,T_N)\,\E_{\pt[N]}\left[\left(M_{T_k}^{u_k} - \sum^m_{i=k+1} c_i
 M^{u_i}_{T_k}\right)^+\right]\nonumber\\
 &= B(0,T_N)
    \Big\{ \E_{\pt[N]}\left[M_{T_k}^{u_k}1_{\{X_{T_k}\ge\mathcal{Y}\}}\right]
           - \sum^m_{i=k+1} c_i
\E_{\pt[N]}\left[M^{u_i}_{T_k}1_{\{X_{T_k}\ge\mathcal{Y}\}}\right]\Big\},
\end{align}
where $\mathcal{Y}$ is defined as in \eqref{unique-zero-Y}. Using the known
distribution function of $X_{T_k}$ under $\P_{T_k}$, cf. \eqref{LSNC-Pk}, the
exponential change of measure formula \eqref{lsnc-measurechange} and
\eqref{Pk-to-final} once again, we arrive at
\begin{align}
\mathbb{S}_{0}(K,T_k,T_m)
 &= B(0,T_k) \cdot
    \overline{\chi}^2_{\nu,\alpha_k}\left(\mathcal{Y}/\sigma_k\right)
   - \sum_{i=k+1}^m c_i B(0,T_i)\cdot
    \overline{\chi}^2_{\nu,\alpha_i}\left(\mathcal{Y}/\sigma_i\right).
\end{align}

\subsection{$\Gamma$-OU martingales}

The second example is an OU-process on $\Rp$ such that the limit law is the
Gamma distribution. Consider the SDE
\begin{align}
\ud X_t = -\lambda X_t\dt + \ud H_t,
 \quad X_0=x\in\Rp,
\end{align}
where $\lambda>0$. The driving \lev process \prazess[H]is a compound Poisson
subordinator with cumulant generating function
\begin{align}
\kappa_{\text{CP}}(u)
 = \frac{\lambda\beta u}{\alpha-u},\qquad(u < \alpha)
\end{align}
where $\alpha,\beta>0$. That is, $H$ is a compound Poisson process with jump
intensity $\lambda\beta$ and exponentially distributed jumps with parameter
$\alpha$. The moment generating function of $H$ is well defined for
$u\in\mathcal{I}_{\text{CP}} = (-\infty,\alpha)$. The limit law of this OU
process is the Gamma distribution $\Gamma(\alpha,\beta)$, i.e. it has the
cumulant generating function
\begin{align}
 \kappa_{\Gamma}(u) = - \beta\ln\big(1-\frac{u}{\alpha}\big);
\end{align}
cf. Theorem 3.15 in \citeN{KellerResselSteiner08} or Nicolato and Venardos
\citeyear{NicolatoVenardos03}. We call the resulting affine process the
$\Gamma$-OU process.

The moment generating function of the random variable $X_t$, using Lemma 17.1 in
\citeN{Sato99}, is
\begin{align}\label{mgf-X-cp}
\E_x\big[\e^{uX_t}\big]
 &= \exp\left( \int\nolimits_0^t\kappa_{\text{CP}}(\e^{-\lambda s}u)\ds
             + \e^{-\lambda t}u\cdot x\right).
\end{align}
Using the change of variables $y=\e^{\lambda s}$ and
$\int\frac{1}{x(ax+b)}\dx=-\frac{1}{b}\ln|\frac{ax+b}{x}|$, we get:
\begin{align}
\int\nolimits_0^t\kappa_{\text{CP}}(\e^{-\lambda s}u)\ds
 &= \int\nolimits_0^t \frac{\lambda\beta\e^{-\lambda s}u}{\alpha-\e^{-\lambda
s}u} \ds
  = \beta \ln \left(\frac{\alpha-\e^{-\lambda t}u}{\alpha-u}\right),
\end{align}
since $u\in(-\infty,\alpha)$. Hence, the moment generating function in
\eqref{mgf-X-cp} is
\begin{align}\label{mgf-X-cp-final}
\E_x\big[\e^{uX_t}\big]
 &= \exp\left( \beta \ln\left(\frac{\alpha-\e^{-\lambda t}u}{\alpha-u}\right)
             + \e^{-\lambda t}u\cdot x\right),
\end{align}
which yields that $X$ is an affine process on $D=\Rp$ with
\begin{align}\label{cp-phipsi}
 \phi_t(u) = \beta \ln\left(\frac{\alpha-\e^{-\lambda t}u}{\alpha-u}\right)
\quad\text{ and }\quad
 \psi_t(u) = \e^{-\lambda t}u,
\end{align}
and the functions $F$ and $R$ have the form
\begin{align}\label{cp-FR}
 F(u) = \frac{\lambda\beta u}{\alpha - u}
\quad\text{ and }\quad
 R(u) = -\lambda u.
\end{align}

Therefore, the affine martingales constructed in \eqref{mg-1} take now the form
\begin{align}\label{mg-CP-OU}
M_t^u
 &= \exp\big(\phi_{T_N-t}(u) + \langle\psi_{T_N-t}(u),X_t\rangle\big)
\nonumber\\
 &= \exp\left( \beta\ln\left(\frac{\alpha-\e^{-\lambda
(T_N-t)}u}{\alpha-u}\right)
              + \e^{-\lambda (T_N-t)}u\cdot X_t  \right),
\end{align}
where $u$ must be chosen such that
$u\in\mathcal{I}_{\text{CP}}\cap\Rp=[0,\alpha)$. Moreover, we have that
$\gamma_X=\infty$, hence the model can fit any term structure of initial \lib
rates.

\subsection{2-factor CIR martingales}
\label{2-CIR}

An affine LIBOR model with a single-factor driving process might not be
completely satisfactory from an econometric point of view. Consider, for
example, the instantaneous correlation between log-forward rates of different
maturities. By \eqref{simple} the log-forward rates depend linearly on the
driving process $X$, such that in the case of a single-factor model they are
perfectly correlated. `Decorrelation' can be achieved by adding additional
factors (see Remark~6.3.1 in Brigo and Mercurio \citeyearNP{BrigoMercurio06}), 
which in general
improves the econometric characteristics of the model in other aspects too, e.g.
by allowing a more flexible term structure of forward rate volatilities. 

Here we propose a simple $2$-factor extension of the CIR model presented in
Section~\ref{CIR} by adding a second independent CIR process. Note that although
the factors are independent, the specification of forward rates through
\eqref{simple} will lead to non-trivial correlations between forward and LIBOR
rates of different maturities. As we shall see, we still have a closed form
valuation formula for caplets and a very tractable formula for swaptions in the
extended model. The driving process $X$
consists of the two factors $X^1,X^2$ given for $j \in \set{1,2}$ by
\begin{equation}\label{Bessel-sde2}
 \dd X^j_t = -\lambda_j \left(X^j_t - \theta_j \right)\dt + 2\eta_j
\sqrt{X^j_t}\ud W^j_t,
 \quad X^j_0=x_j\in\Rp,
\end{equation}
where $\lambda_j, \theta_j, \eta_j \in \Rp$ and the Brownian motions $W^1$ and
$W^2$ are independent. This process is an affine process on
$\Rp^2$, with
\begin{align}\label{Bessel-FR2}
 F(u) = \lambda_1 \theta_1  u^1 + \lambda_2 \theta_2 u^2
\quad\text{ and }\quad
 R_j(u) = 2 \eta_j^2 (u^j)^2 - \lambda_j u^j,
\end{align}
where $u = (u^1,u^2)$. Define $\phi^j_t(u), \psi^j_t(u)$ as in
\eqref{Bessel-phipsi}, by adding the index $j \in \set{1,2}$ to all parameters,
including $u$. Furthermore set $\phi_t(u) = \phi_t^1(u^1) + \phi_t^2(u^2)$. The
martingales $M^u$ now take the form
\begin{align}\label{Bessel-mg2}
M_t^{u} 
 = \exp\Big(\phi_{T_N-t}(u) + \psi^1_{T_N-t}(u^1) \cdot X^1_t
    + \psi^2_{T_N-t}(u^2) \cdot X^2_t\Big),
\end{align}
where $u^j$ must be chosen such that $u^j < \frac{1}{2\eta_j^2 b_j(T_N)}$. Note
that the $u_k = (u_k^1, u_k^2)$ are no longer uniquely determined by fitting the
initial LIBOR rates, but provide additional freedom to fit e.g. the term
structure of caplet implied volatilities or the implied correlations from
swaption prices.

It is clear that in analogy to Section~\ref{CIR} $X_t^1$ and $X_t^2$ follow an
$\lsnc$-distribution under all forward measures. The log-forward rate can be
written as $Z = A_k + B_k^1 X_{T_k}^1 + B_k^2 X_{T_k}^2$, cf. \eqref{simple},
where 
\begin{align*}
A_k = \phi_{T_N - T_k}(u_k) - \phi_{T_N - T_k}(u_{k+1}), \qquad
B_k^j = \psi^j_{T_N - T_k}(u^j_k) - \psi^j_{T_N - T_k}(u^j_{k+1}) 
\end{align*}
for $j \in \set{1,2}$. Thus the distribution of $Z$ is a convolution of two
$\lsnc$-distributions which, however, does not belong to the same  class.
Nevertheless $Z$ can be expressed as a positive linear combination of
independent non-central chi-squared random variables plus a (deterministic)
constant. Distributions of this type have been studied extensively in the
context of quadratic forms of normal random variables. Their distribution
function has an infinite series expansion (see~\citeNP{Press66}) and has been
implemented in several software packages (e.g. \texttt{CompQuadForm} in the
statistical computation environment $\textsf{R}$; see
\citeNP{DuchesneMicheaux10}). For vectors $\boldsymbol{\sigma},
\boldsymbol{\nu}, \boldsymbol{\alpha}$ of $n$ positive elements each, let
$U(x;\boldsymbol{\sigma}, \boldsymbol{\nu}, \boldsymbol{\alpha})$ denote the
distribution function of $\sum_{j=1}^n \sigma_j Y_j$ where $Y_j$ are independent
non-central chi-square distributed random variables with $\nu_j$ degrees of
freedom and non-centrality parameter $\alpha_j$ respectively. In addition,
define $\overline{U}(x) = 1 - U(x)$. Then, in analogy to
\eqref{CIR-caplet-closed}, we obtain the following closed-form caplet valuation
formula: 
\begin{align}
\mathbb{C}(T_k,K)
 &= B(0,T_k)\cdot \overline{U}\big(\log\mathscr{K}-A_k;
\boldsymbol{\tilde \sigma}_{k}, \boldsymbol{\nu}, \boldsymbol{\alpha}_{k}\big)
 \nonumber \\
 &\quad- \mathscr{K}^\star \cdot \overline{U}\big(\log\mathscr{K}-A_k;
  \boldsymbol{\tilde
\sigma}_{k+1},\boldsymbol{\nu},\boldsymbol{\alpha}_{k+1}\big)
\end{align}
where $\mathscr{K}^\star=\mathscr{K}\,B(0,\tk[k+1])$, $\boldsymbol{\nu} =
\left(\tfrac{\lambda_1 \theta_1}{\eta_1^2}, \tfrac{\lambda_2
\theta_2}{\eta_2^2}\right)$,  $\boldsymbol{\tilde \sigma_k} = (B_k^1 \sigma_k^1,
B_k^2 \sigma_k^2)$ and $\boldsymbol{\alpha_k} = (\alpha_k^1, \alpha_k^2)$  with
\begin{align*}
\sigma_i^j &= \left(\eta_j^2 b_j(T_k)\right) / \zeta_i^j, \\
\alpha_i^j &= x_j a_j(T_k) / \left(\eta_j^2 b_j(T_k) \zeta_i^j\right),
\\
\zeta_i^j &= 1 - 2\eta_j^2 b_j(T_k)\psi^j_{T_N - T_k}(u^i_k)
\end{align*}
for $j \in \set{1,2}$ and $i \in \set{k, \dotsc, N}$.\\

Regarding the valuation of swaptions, define
\begin{align*}
A_{k,i} = \phi_{T_N - T_k}(u_i) - \phi_{T_N - T_k}(u_k), \qquad
B_{k,i}^j = \psi^j_{T_N - T_k}(u^j_i) - \psi^j_{T_N - T_k}(u^j_k),
\end{align*} 
consider the function
\[\underline{f}(x,y) = 1 - \sum_{i=k+1}^m c_i \exp \left(A_{k,i} + B_{k,i}^1  x
+ B_{k,i}^2 y \right),\]
and for each $x \in \Rp$ define 
\[\mathcal{Y}(x) = \inf \set{y \ge 0: \underline{f}(x,y) = 0}.\]
Since $\underline{f}(x,y)$ is continuous and increasing in $y$ for each fixed
$x$ we have that $f(x,y) = (\underline{f}(x,y))^+$ can be written as $f(x,y) =
\underline{f}(x,y)\Ind{y \ge \mathcal{Y}(x)}$.
Hence, the price of a swaption $\mathbb{S}_{0}(K,T_k,T_m)$ can be written as
\begin{align*}
 \mathbb{S}_{0}(K,T_k,T_m) &= B(0,T_k)
\E_{\pt[k]}\left[f(X_{T_k}^1,X_{T_k}^2)\right] \\
 &= B(0,T_k) \E_{\pt[k]}\left[\underline{f}(X_{T_k}^1,X_{T_k}^2) \Ind{X^2_{T_k}
\ge \mathcal{Y}(X^1_{T_k})}\right] \\
 &= B(0,T_k) \pt[k]\left[X^2_{T_k} \ge \mathcal{Y}(X^1_{T_k})\right] \\
 &\quad - \sum_{i = k+1}^m c_i B(0,T_i) \pt[i]\left[X^2_{T_k} \ge
\mathcal{Y}(X^1_{T_k})\right].
\end{align*}
Using the fact that $X^2_{T_k}$ is $\lsnc$-distributed under each forward
measure, we can further rewrite the above as
\begin{align*}
 \mathbb{S}_{0}(K,T_k,T_m) &= B(0,T_k)
\E_{\pt[k]}\left[\overline{\chi}^2_{\nu^2,\alpha^2_k}\left(\frac{\mathcal{Y}
(X^1_{T_k})}{\sigma^2_k}\right)\right] \\
 &\quad - \sum_{i = k+1}^m c_i B(0,T_i)
\E_{\pt[i]}\left[\overline{\chi}^2_{\nu^2,\alpha^2_i}\left(\frac{\mathcal{Y}
(X^1_{T_k})}{\sigma^2_i}\right)\right] =: \mathbb{S}_{0}.
\end{align*}
Switching back to the terminal measure, we finally obtain
\begin{align}
\mathbb{S}_{0} &= B(0,T_N) \E_{\pt[N]}\left[\beta_{k,k} \exp
 \left(\psi^1_{T_N - T_k}(u_k^1)
 X^1_{T_k}\right)\overline{\chi}^2_{\nu^2,\alpha^2_k}
\left(\frac{\mathcal{Y}(X^1_{T_k})}{\sigma^2_k}\right)\right. \nonumber \\
 &\quad - \left. \sum_{i = k+1}^m c_i \beta_{k,i} \exp 
 \left(\psi^1_{T_N - T_k}(u_i^1) X^1_{T_k}\right)
 \overline{\chi}^2_{\nu^2,\alpha^2_i}\left(\frac{\mathcal{Y}(X^1_{T_k})}
  {\sigma^2_i}\right)\right],
\end{align}
where 
\begin{align*}
\beta_{k,i} = 
 \exp\left(\vphantom{\int}\phi^1_{T_N - T_k}(u_i^1) +
  \phi_{T_N-T_K}^2(u_i^2) + \psi_{T_N-T_k}^2(u_i^2)x_2\right).
\end{align*}
This expression can be evaluated by a one-dimensional numerical integration
against the non-central chi-square distribution, while the value of
$\mathcal{Y}(x)$ has to be computed by a line search.

\section{Numerical illustration}
\label{numerical_illustrations}

In order to showcase some prototypical volatility surfaces resulting from the
proposed affine \lib models, we consider a tenor structure of zero coupon bond
prices generated from the Svensson family. We fit the initial LIBOR rates
implied by the bond prices using the $u$'s as described in Proposition
\ref{initial-fit}, and then price caplets and plot the implied volatility
surfaces for different parameters of the driving affine factor process. The
implied volatility surface corresponding to the CIR parameters
\begin{align*}
\lambda = 0.026, \quad \theta = 0.65, \quad \eta  = 0.5 \quad
\text{and} \quad X_0 = 3.45
\end{align*}
is shown in Figure \ref{fig:CIR_volsurface}. The implied volatilities from the
CIR model exhibit a skewed shape as a function of strike price, while the
term structure is decreasing as a function of maturity. In the example
corresponding to the $\Gamma$-OU process, we consider the same tenor structure
and the implied volatility surface corresponding to the $\Gamma$-OU parameters
\begin{align*}
\lambda = .05, \quad \alpha = 0.8, \quad \beta = 0.5, \quad X_0 = 1.35
\end{align*}
is shown in Figure \ref{fig:GammaOU_volsurface}. The implied volatility exhibits
a smile shape in this example, while we can still observe a decreasing  behavior
for the term structure of volatilities. 
\begin{figure}
 \centering
 \includegraphics[width=10cm]{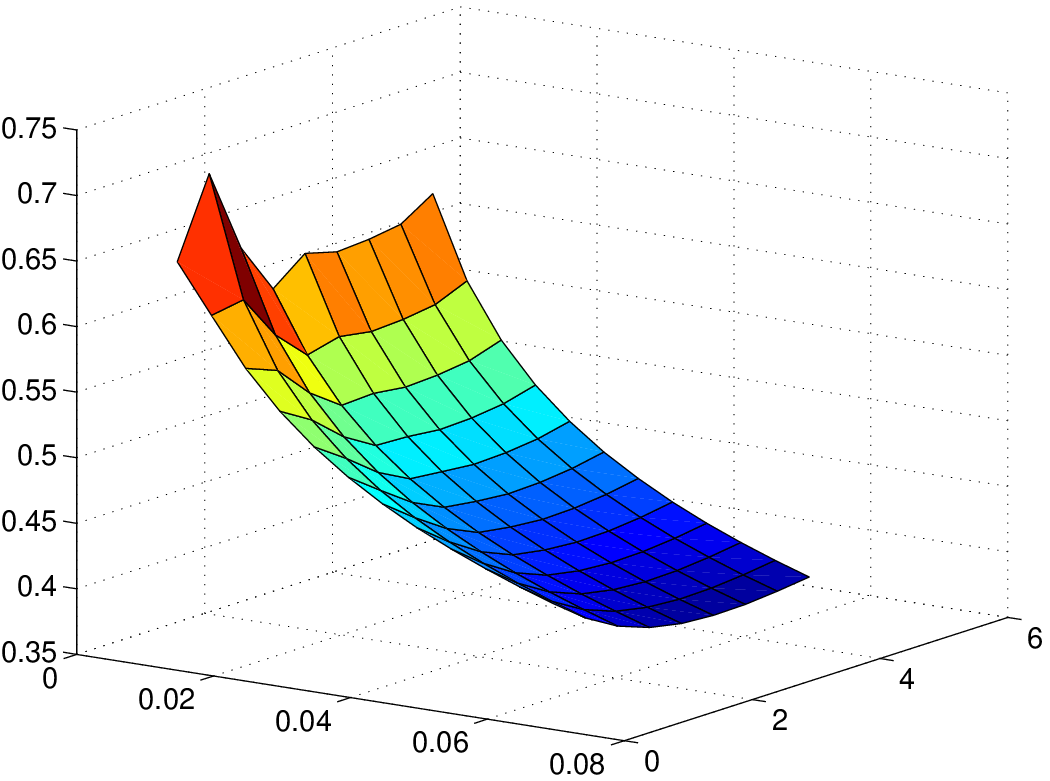}
 \caption{Implied volatility surface for the CIR martingales.}
 \label{fig:CIR_volsurface}
\end{figure}
\begin{figure}
 \centering
 \includegraphics[width=10cm]{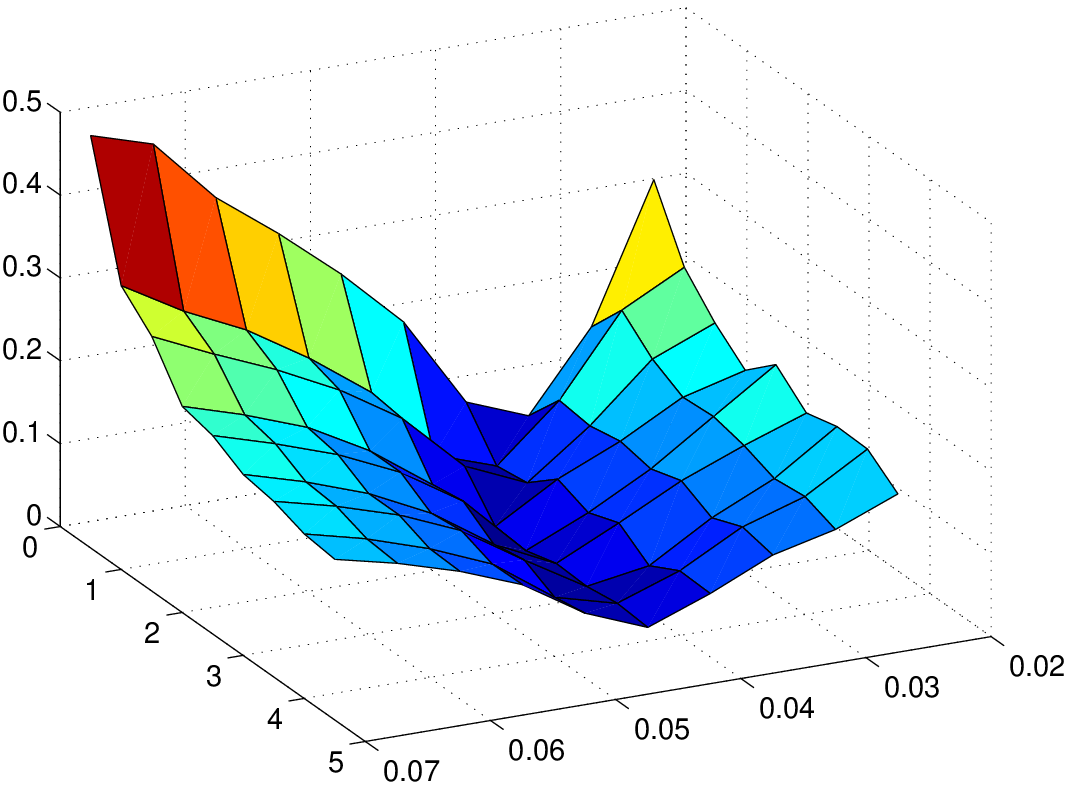}
 \caption{Implied volatility surface for the $\Gamma$-OU martingales.}
 \label{fig:GammaOU_volsurface}
\end{figure}
Finally, let us point out that these figures  stem from 1-factor models; one
would expect to observe even flexible shapes of surfaces from multi-factor
affine LIBOR models. Several impressive results in this direction have been
obtained by \citeANP{DaFonseca_Gnoatto_Grasselli_2011}
\citeyear{DaFonseca_Gnoatto_Grasselli_2011} .

\bibliographystyle{chicago}
\bibliography{references}

\end{document}